\newtheorem{Theorem}{Theorem}[section]
\newenvironment{theorem}{\bf\begin{Theorem}\rm\em}{\end{Theorem}} % theorem in italic
\newtheorem{Lemma}{Lemma}[section]
\newenvironment{lemma}{\bf\begin{Lemma}\rm\em}{\end{Lemma}} 
\newtheorem{remark}{Remark}[section]
\newtheorem{Proposition}{Proposition}[section]
 \newtheorem{Corollary}{Corollary}[section]
\newenvironment{corollary}{\bf\begin{Corollary}\rm\em}{\end{Corollary}} 
\newcommand{\R}{\mathbb{R}}
\newcommand{\one}{\mathbf{1}}
\newcommand{\e}{{\rm e}}
\newcommand{\sinr}{\textrm{SINR}}
\newcommand{\calF}{\mathcal{F}}
\newcommand{\maximize}{\operatornamewithlimits{maximize~~~}}
\newcommand{\inargument}{\operatornamewithlimits{in~~}}
\newcommand{\subjectto}{\operatornamewithlimits{subject~to~~~}}
\newcommand{\remove}[1]{}
\newcommand{\calS}{\mathcal{S}}
\renewcommand{\Pr}{\mathbb{P}}
\newcommand{\E}{\mathbb{E}}
\begin{document}
\title{Analysis of a Proportionally Fair and Locally Adaptive
Spatial Aloha in Poisson Networks
\thanks{
%F. Baccelli is with the University of Texas, Austin, USA and Inria--ENS, France;  B. B{\l}aszczyszyn is with Inria--ENS, France; C. Singh was with Inria--ENS, France, when this work started. He is now with the University of Illinois, Urbana Champaign USA.
 This work was carried out in parts at Laboratory of Information, Networking and Communication Sciences~(LINCS) Paris and was supported by the INRIA-Alcatel Lucent Bell Labs Joint Research Center.  }}
\author{
\IEEEauthorblockN{Fran{\c c}ois Baccelli}
\IEEEauthorblockA{
University of Texas, Austin, USA \\
and  Inria--ENS, France
 \\
 Email: francois.baccelli@austin.utexas.edu\\[-4ex]}
    \and
\IEEEauthorblockN{Bart{\l}omiej B{\l}aszczyszyn}
\IEEEauthorblockA{
Inria--ENS, France\\
23 avenue d'Italie  75214 Paris\\
 Email: Bartek.Blaszczyszyn@ens.fr\\[-4ex]}
   \and
 \IEEEauthorblockN{Chandramani Singh}
 \IEEEauthorblockA{
  Coordinated Science Lab \\
UIUC, Urbana, IL 61801, USA\\
  Email: chandramani.singh@inria.fr\\[-4ex]} 
 }
\maketitle

\begin{abstract}
The proportionally fair sharing of the capacity of a Poisson
network using Spatial-Aloha leads to closed-form performance
expressions in two extreme cases: (1) the case without topology
information, where the analysis boils down to a parametric
optimization problem leveraging stochastic geometry;
(2) the case with full network topology information,
which was recently solved using shot-noise techniques.
We show that there exists a continuum of adaptive controls between
these two extremes, based on local stopping sets,
which can also be analyzed in closed form. We also show that
these control schemes are implementable, in contrast to the
full information case which is not. As local information increases,
the performance levels of these schemes are shown to get arbitrarily
close to those of the full information scheme. The analytical
results are combined with discrete event simulation to provide
a detailed evaluation of the performance of this class of medium
access controls.
\end{abstract}
\section{Introduction}
This work is focused on a version of 
Spatial Aloha~\cite{ctrltheory-wireless.baccelli-etal06aloha-multihop-wireless} where each node
willingly controls his medium access probability (MAP) in order to maximize
the network-wide sum of the logarithms of the node throughputs,
a global objective referred to as proportional fairness in the literature.
A good analogy is that of TCP users who willingly
throttle their transmission rates so as to maximize some network wide utility \cite{Kelly}.
In TCP, the transmission control is both {\em adaptive and decentralized}:
it is decentralized since each user throttles his rate based on his own packet losses,
which are seen as an indicator of the presence of contending users;
it is adaptive in that the throttling is substantial if there are many contenders (and hence losses)
at any given time and small otherwise.
In the version of Spatial Aloha studied in this paper, the access
control is also decentralized and adaptive: it is decentralized because each node
computes its MAP using an
equation based on some {\em local spatial information} subsuming its wireless contention status;
it is adaptive w.r.t. local spatial conditions in that a node will apply 
an heavy throttling if there are many wireless receivers nearby, and a small
one (or none) otherwise.

Stochastic geometry \cite{stochproc-wireless.baccelli-blaszczyszyn09stochastic-geometry-wireless-networks-1}
has recently been used for the analysis
and performance evaluation of wireless~(ad hoc as well as cellular) networks;
in this approach, one models node locations as a spatial point process,
e.g., a homogeneous Poisson point process~(PPP), and one
computes various network statistics, e.g., the distribution of interference,
the successful transmission probability, the coverage~(or outage)
probability etc. as spatial averages. This leads
to closed form expressions for a variety of performance metrics
that are then amenable to optimization
with respect to network parameters 
(node density, protocol parameters, etc.)~\cite{stochproc-wireless.baccelli-blaszczyszyn09stochastic-geometry-wireless-networks-2}.
%More precisely, this approach yields spatial averages
%of the performance metrics for given network parameters;
%then the parameters can be chosen to optimize performance
In the Aloha case,
this approach takes a macroscopic view of the network with
the underlying assumption, justified by homogeneity, that all nodes
in the network have identical statistical characteristics.
The parameter to be optimized is the MAP, and at the optimum
operation point, all users have the same MAP as a direct
consequence of homogeneity.

Analyzing the behavior of the adaptive version of Spatial
Aloha described above in the context of large random networks
requires extending stochastic geometry to the situation where
each node selects a MAP resulting from a global optimization.
This is not an easy task in general as
optimization schemes aiming at maximizing some global utility lead to
intricate correlations between the behaviors of all nodes and these
correlations most often prevent the use of the independence based tools
underlying most of random graph theory and stochastic geometry.

The present paper identifies a class of local adaptation policies for which  
the optimal node MAP control leads to closed form expressions for both
the behavior of a typical node and global performance
metrics; these closed forms are obtained under the 
assumption of an {\em infinite random network} whose nodes have locations 
forming a realization of a homogeneous PPP in the Euclidean plane. 
These adaptive policies are defined in term of {\em stopping sets} (see below).
A typical example of stopping set is the smallest disk that contains the
$k$ closest receivers/neighbors of a node. The associated adaptive policy 
consists in having each transmitter controlling his MAP
as a function of the {\em actual} locations of the $k$ closest receivers
which are the closest to its own location and the {\em estimated density}
of the other receivers, and in adopting the 
MAP which maximizes the sum of the logarithms of
the rates of all nodes in view of this information and estimation.
Another example of stopping set is any deterministic set (e.g. a disk). 
The class is however much larger than what is suggested by these
two simple examples, as we shall see below.

So far only two extreme cases are understood within this general
adaptive Aloha framework: (1) the case with empty stopping sets: this 
form of open loop control can be analyzed using the tools
of stochastic geometry and parametric optimization alluded to above;
(2) the case with $\R^2$ stopping sets:
this form of adaptive control, which is based on full information on the network topology,
was quite recently analyzed in \cite{BacSing}.
As in many optimal control problems, none of these extreme cases are
really satisfactory. The former is suboptimal: it is well known that the knowledge
of actual location information should allow one to quite significantly improve overall network
performance; the latter is not implementable as it requires
location informations of all the nodes in the network (each
transmitter has to acquire knowledge
on either the locations of all receivers or its path-loss to all
receivers, which form infinite sequences in both cases).

The main new mathematical results obtained in the present paper
are (i) the characterization of the optimal policy for any (translation invariant) 
stopping set information structure [Theorem \ref{t:opt-pi}] and
(ii) closed form expressions for the resulting
performance metrics [Theorems \ref{thm41}--\ref{thm44} below]. These, in the first place, constitute an important
theoretical extension of what is currently known on Spatial Aloha.
In addition, as we shall see, the locality of the stopping set can be
used to define implementable control schemes, and
our analytical tools can also be used to find a good trade-off
between the quality of the optimal strategy and the locality
of the stopping sets.
 
Let us now give a somewhat broader review of the state of the art
on the performance of Aloha, Spatial Aloha and related protocols. Aloha and slotted Aloha
were introduced and analyzed by Abramson~\cite{ctrltheory-comnets.abramson70aloha}
and Roberts~\cite{ctrltheory-comsnets.roberts75aloha-slots-capture}
respectively. In these protocols, only one node could successfully
transmit at a time. In Spatial Aloha, as
considered in~\cite{ctrltheory-wireless.baccelli-etal06aloha-multihop-wireless},
nodes are located in the Euclidean space, signal power is attenuated according to 
some path-loss function and the success of each transmission is based on
Signal to Interference and Noise Ratio (SINR).
This leads to an essential property of wireless networks
which is spatial reuse, namely the fact that infinitely many simultaneous successful
transmissions can take place in the whole space, provided they are ``well separated''.
More precisely, it was shown that such networks can sustain a positive
spatial density of successful transmissions~\cite{ctrltheory-wireless.baccelli-etal06aloha-multihop-wireless}.
All the above protocols prescribe
identical MAPs for all nodes.

Among the initial attempts of MAP adaptation in Aloha,
let us first quote~\cite{ctrltheory-wireless.hazek85-decentralized-control-multiaccess}
which analyzes stochastic approximation based strategies
adapting the MAP to receiver feedbacks and aiming at stabilizing the network.
This approach was considered in the spatial random network context
in \cite{BordenaveFS07} for the protocol model but not for the SINR setting.
The other recent publications on the matter can be organized as follows: 
(1) {\em Adaptation to local channel conditions}. In spatial random networks with SINR based reception,
such local adaptations are amenable to parametric optimization as described above~\cite{ji-we-and, ctrltheory-wireless.baccelli-etal09spatial-opportunistic-aloha}. Some versions of this type of adaptation are referred
to as Oppotunistic Aloha \cite{stochproc-wireless.baccelli-blaszczyszyn09stochastic-geometry-wireless-networks-2}.
The case with no dependence on geometry~(i.e., no path-loss component)
is also considered in~\cite{ctrltheory-wireless.hsu-su11channel-aware-aloha} 
where a paradoxical behavior is identified for certain topologies:
plain Aloha yields better aggregate throughput than opportunistic Aloha.
(2) {\em Centralized optimization for fixed topologies}. 
The case with SINR based reception is studied
in~\cite{ctrltheory-wireless.mohsenian-rad-et-al10sinr-random-access}, where
the authors give centralized algorithms 
determining the random access probabilities 
that maximize either the total network throughput
or lead to a max-min fair operation. By nature, this approach is difficult to use
in large random spatial networks.
(3) {\em Interference graph optimization}. There is a vast literature on this topic
which replaces SINR reception by exclusion rules on a graph.
Among the most relevant papers for our purposes, let us
quote~\cite{ctrltheory-wireless.wang-kar04max-min-fair-aloha}
which proposes algorithms that lead to either proportional fairness or max-min fairness
and~\cite{gupta-stolyar12capacity-region-random-access} which 
derives the Pareto boundary of the rates achievable by Aloha.
(4) {\em Game theoretic analysis}. In this appraoch, the MAP adaptation 
is modeled by a game. Again, we will limit ourselves to
papers investigating the case of spatial random networks. The most relevant
references are~\cite{gametheory-wireless.hanawal-etal12medium-access-games}
and \cite{zhang-haenggi12power-control-poisson-networks}.
Reference~\cite{gametheory-wireless.hanawal-etal12medium-access-games}
%, which is built upon the model of~\cite{ctrltheory-wireless.baccelli-etal06aloha-multihop-wireless},
formulates the channel access problem as a non-cooperative game among users and
proposes pricing schemes that induce a socially optimum behavior at equilibrium.
However the approach is restricted to
symmetric Nash equilibria, which forbids the type of non-symmetrical
adaptations of interest here. 
Reference~\cite{zhang-haenggi12power-control-poisson-networks} shows
that Aloha is often a Pareto optimal strategy when users aim at selecting 
their transmission powers in a game theoretic way.
A key difference between game theory and the line of thought
of the present paper is that users are selfish in the former and altruistic
in the latter (as explained above, as in TCP, they willingly throttle their
rate to maximize some utility).

Let us summarize this survey by stressing that,
excluding \cite{BacSing} discussed above, 
none of the above references considers a large random Aloha network where nodes adapt to wireless channel
randomness and to spatial fluctuations of topology for making
non asymmetric random access decisions maximizing
some utility, as we do in the present paper.

There is also a vast literature on the modeling
of CSMA in large random networks by stochastic geometry.
In spite of the fact that
the very nature of this MAC protocol is adaptive,
we will not discuss this here since
CSMA is only designed to guarantee a reasonable scheduling,
not to optimize any utility.
% each node senses
%the network and acts in order to ensure
%that certain exclusion rules are satisfied,
%namely that neighboring nodes do not
%access the channel simultaneously.
%However, CSMA as such is only designed to guarantee a reasonable scheduling,
%not to optimize any utility of the throughput.
%The closest reference to our work is probably~\cite{DBLP:journals/questa/BaccelliLRSSW12}
%where the authors study an adaptation of the
%exclusion range and of the transmit power
%of a CSMA node to the location of the closest
%interferer. This adaptation aims at
%maximizing the mean number
%of nodes transmitting per unit time and space~(while respecting the above exclusion rules).
%This mean number is however only a surrogate of the rate.

The present paper is structured as follows:
the network setting is described in Section~\ref{sec:network-model}.
There, we describe the quasi-static networks of interest, in which mobiles
learn the local topology and incorporate this information
in their MAP selection.
Section~\ref{sec:prop-fair-aloha} is focused on the stopping set based distributed
algorithms that lead to a proportional fair sharing of the network resources.
We show that nodes can compute the optimal MAPs as solutions to certain
fixed point equations.
Section~\ref{ss:stocg} contains the analytical performance results.
For nodes forming a realization of a
homogeneous Poisson Point Process (PPP) in the Euclidean plane,
we compute the MAP distribution using the theory of shot noise fields.
%To the best of our knowledge, this distribution is the first
%example of successful combination of stochastic geometry and
%local adaptive protocol design aimed at optimizing a utility function
%within this Aloha setting.
Using this distribution, we also derive
the mean utility of a typical node.
The numerical results are gathered in Section~\ref{ss:numerics}.
The aim of this section is three-fold: (1) validate the analytical
results against simulation; (2) quantify the gains brought
by adaptation;
(3) determine a good trade-off between 
information and performance. 
Finally, we discuss the implementability of this class of 
controls in Section \ref{sec:impl}.
%\vspace{-.3cm}
%\input{network.tex}
\section{Network Model}
\label{sec:network-model}

%\centering

We model the ad-hoc wireless network as a set of transmitters
and their corresponding receivers, all located in the
Euclidean plane. This is often referred to as the ``bipole model''~\cite[Chapter~16]{stochproc-wireless.baccelli-blaszczyszyn09stochastic-geometry-wireless-networks-2}.
We assume that each node has an infinite backlog of packets to transmit to its
receiver. The transmitters follow the
slotted version of the Aloha medium access control~(MAC) protocol.
%(see Figure~\ref{fig:bipole-model}).
A transmitter, in each transmission
attempt, sends one packet which occupies one slot. The transmission succeeds if the
signal to noise plus interference ratio~(SINR) at the corresponding receiver exceeds a designated threshold.
%A more detailed description of the model follows.

%\begin{figure}[h]
%\centering
%\includegraphics[width=2.5in,height=2.0in]{../plots/bipolar-model3.eps}
%\caption{A snapshot of bipolar MANET with Aloha as the medium access protocol.
%The {\it diamonds} represent transmitters, and the connected {\it circles} the corresponding receivers.
%The solid diamonds represent the nodes that are transmitting in a slot.}
%\label{fig:bipole-model}
%\end{figure}

We assume that the transmitting nodes are scattered according to a homogeneous Poisson
point process of intensity $\lambda$. Each transmitter is associated
with a multidimensional mark that carries information
about the location of the corresponding receiver, the fading
conditions of the channels to all the receivers and about its MAC
status at the current time slot. More precisely, we have at our disposal
a marked Poisson point process $\tilde{\Phi} = \{X_i,y_i,{\bf
  F}_i,e_i,p_i\}$. Below,
\begin{itemize}
\item $\Phi = \{X_i\}$ denotes the PPP of
intensity $\lambda$ representing the location of transmitters in $\R^2$.
\item $\bar{\Phi} = \{y_i\}$ denotes the locations of the receivers. The receiver of transmitter $X_i$ 
is $y_i$. For notational
convenience, we also define random vectors $\{r_i = y_i - X_i\}$. We assume that
$|r_i| = r$, a constant, and that $\{\angle r_i\}$
are i.i.d. and uniform on $[0,2\pi]$ and independent of $\Phi$. 
From the displacement theorem~\cite{stochproc-wireless.baccelli-blaszczyszyn09stochastic-geometry-wireless-networks-1}, $\bar{\Phi}$
is also a homogeneous PPP of intensity $\lambda$.
\item $\{{\bf F}_i = (F_i^j:j)\}$ where $F_i^j$ denotes the random
  fading between transmitter $j$ and receiver $i$. We assume that channels are Rayleigh faded and that the random variables
$(F_i^j:i,j)$ are independent and exponentially distributed with mean $1/\mu$.
The independence of the components of $(F_i^j:i,j)$ is justified in moderately dense networks.
\item $\{e_i\}$ are indicators that take value
one if the corresponding node decides to transmit in the considered
time slot, and zero otherwise.
\item $p_i$ denotes the MAP of node $i$:
$p_i = \mathbb{P}(e_i = 1) = 1 - \mathbb{P}(e_i = 0)$. Also,
given $p_i$, $e_i$ is independent of everything else
including $\{e_j\}_{j \ne i}$.
\end{itemize}
We assume that each transmitter uses unit transmission power.
We also adopt an omnidirectional path-loss model with power attenuation
given by $l(z) = z^{-\beta}$ for a $\beta > 2$. The receivers are
also subjected to white Gaussian thermal noise with variance
$W$, which is also constant across slots. Thus, when a transmitter $i$
transmits, the SINR at its receiver is
\[
\sinr_i = \frac{F_i^i r^{-\beta}}{\sum_{j \neq i}|X_j - y_i|^{-\beta}F_j^i e_j + W},
\]
where the first term in the denominator is the power of the interference at receiver $i$, which is
a shot noise associated with $\tilde{\Phi}$ (see e.g. \cite{stochproc-wireless.baccelli-blaszczyszyn09stochastic-geometry-wireless-networks-1}).
We assume that a transmission from $i$ to its receiver
is successful if $\sinr_i$ exceeds some threshold $T$.
%\vspace{-.3cm}
%\input{optimality.tex}
\section{Proportionally Fair Aloha}
\label{sec:prop-fair-aloha}

The principle of proportional fairness~\cite{Kelly}
consists in maximizing the sum of the
logarithms of user  throughputs.
\subsection{Node Throughput}
In our model, the throughput of
transmitter $i$ (actually, transmitter-receiver pair $i$)
is $p_i q_i$, where $q_i$ is the probability of
successful transmission of node $i$  given this node is
authorized to transmit. More specifically, we consider the conditional
probability of successful transmission of node $i$  {\em given the network
geometry} $\Phi,\bar\Phi$:  $q_i=q_i(\Phi,\bar\Phi):=\Pr\{\mathrm{SINR}_i\ge
T\,|\,\Phi,\bar \Phi\}$.
Let $b_{ji} := |X_j - y_i|^{\beta}/(T r^{\beta})$.
The conditional probabilities $q_i$ admit the following expression,
valid for {\em arbitrary} (fixed) network topology
$\Phi,\bar\Phi$, with either a finite or an infinite number of nodes:
\begin{lemma}
\begin{equation}
\label{eqn:succ-prob}
q_i = e^{-\mu T r^{\beta}W}
\prod_{j \neq i}\left(1 - \frac{p_j}{1 + b_{ji}}\right).
\end{equation}
\end{lemma}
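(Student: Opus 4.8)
The plan is to condition on the network geometry $\Phi,\bar\Phi$ and to compute the probability with respect to the only remaining sources of randomness, namely the fading variables $\{F_j^i\}$ and the access indicators $\{e_j\}$ (recall that the noise $W$ is a deterministic constant across slots). First I would rewrite the success event. Since $\sinr_i \ge T$ is equivalent to
\[
F_i^i \ge T r^{\beta}\Bigl(\sum_{j\neq i}|X_j-y_i|^{-\beta}F_j^i e_j + W\Bigr),
\]
and $F_i^i$ is exponential with mean $1/\mu$ and independent of everything on the right-hand side, conditioning first on all variables except $F_i^i$ and using $\Pr(F_i^i\ge t)=\e^{-\mu t}$ yields
\[
q_i = \E\Bigl[\e^{-\mu T r^{\beta}W}\prod_{j\neq i}\e^{-\mu T r^{\beta}|X_j-y_i|^{-\beta}F_j^i e_j}\;\Big|\;\Phi,\bar\Phi\Bigr],
\]
where the deterministic factor $\e^{-\mu T r^{\beta}W}$ pulls out of the expectation.

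Next I would evaluate the expectation of the product. Conditionally on $\Phi,\bar\Phi$, the pairs $(F_j^i,e_j)$ are independent across $j$, so the expectation of the product factorizes into a product of expectations. For each $j\neq i$ I condition on $e_j$: the corresponding term equals $1$ when $e_j=0$, and when $e_j=1$ the Laplace transform of the exponential fading gives
\[
\E\bigl[\e^{-s F_j^i}\bigr]=\frac{\mu}{\mu+s},\qquad s=\mu T r^{\beta}|X_j-y_i|^{-\beta}=\frac{\mu}{b_{ji}},
\]
using the definition $b_{ji}=|X_j-y_i|^{\beta}/(Tr^{\beta})$. Hence the $e_j=1$ factor is $b_{ji}/(1+b_{ji})$, and averaging over the Bernoulli indicator $e_j$ of mean $p_j$ gives
\[
(1-p_j)\cdot 1 + p_j\cdot\frac{b_{ji}}{1+b_{ji}} = 1-\frac{p_j}{1+b_{ji}},
\]
which is exactly the claimed per-interferer factor. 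Multiplying over all $j\neq i$ and restoring the noise term yields \eqref{eqn:succ-prob}.

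The step needing care, and the one I expect to be the main obstacle in the infinite-node case, is the interchange of the expectation and the (infinite) product. For a finite topology this factorization is immediate from conditional independence, but when $\Phi$ carries infinitely many points one must argue that the product of conditional expectations coincides with the expectation of the a.s.\ convergent infinite product. I would justify this by noting that the partial products over finite index sets take values in $[0,1]$, that the interfering shot noise $\sum_{j\neq i}|X_j-y_i|^{-\beta}F_j^i e_j$ is a.s.\ finite for $\beta>2$, and then invoking dominated convergence together with the conditional independence across $j$ to pass to the limit; the resulting infinite product $\prod_{j\neq i}\bigl(1-p_j/(1+b_{ji})\bigr)$ is well defined since every factor lies in $[0,1]$.
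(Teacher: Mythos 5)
Your proof is correct and follows essentially the same route as the paper: condition on everything except $F_i^i$, use the exponential tail $\Pr(F_i^i \ge t) = \e^{-\mu t}$ to pull out the noise factor, then factorize over $j$ by conditional independence and average each factor over $(F_j^i, e_j)$ to obtain $1 - p_j/(1+b_{ji})$. Your closing paragraph justifying the interchange of expectation and the infinite product (via a.s.\ finiteness of the shot noise for $\beta > 2$ and dominated convergence on partial products in $[0,1]$) is a point the paper's proof passes over silently, and it is a welcome addition of rigor rather than a change of method.
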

\begin{IEEEproof}
Condition first  on $\Phi, \bar\Phi$ and  $\calF_i=\{F_j^i,e_j, j \neq i\}$:
\begin{align}
\lefteqn{\mathbb{P}\left\{\sinr_i \geq T \,|\, \Phi,\bar\Phi,\calF_i\right\}} \nonumber \\
      & = \mathbb{P}\left\{F_i^i \geq T r^{\beta}\left(\sum_{n \neq i}|X_j - y_i|^{-\beta}F_j^i e_j + W\right) \Big| \Phi,\bar\Phi,\calF_i\right\}  \nonumber \\
      & = e^{-\mu T r^{\beta}W}
e^{-\sum_{j \neq i} \frac{\mu T r^{\beta}F_j^i e_j}{|X_j - y_i|^{\beta}}},  \nonumber
\end{align}
where the last expression follows from the fact that $F_i^i$ is
exponential, independent of $\calF_i$. Averaging over
$\calF_i$, we obtain
$$ \prod_{j \neq i}
\mathbb{E}\left[e^{-\frac{\mu T r^{\beta}F_j^i e_j}{|X_j - y_i|^{\beta}}}\right]
= \prod_{j \neq i} \left(1 - p_j + \frac{p_j}{1 + 1 / b_{ji}} \right)\,. 
$$
We obtain~(\ref{eqn:succ-prob}) by further simplifying the product factors.
\end{IEEEproof}

Thus the proportional fairness of our model consists
in finding the $p_i$s, $0\le p_i\le 1$,  which maximize the sum of the
logarithms of the throughputs   $\sum_i \log(p_iq_i)$.
However, in  infinite networks,  this sum
is typically unbounded. In particular, it is unbounded for almost all
realizations of our   Poisson network of
Section~\ref{sec:network-model}.
Below, we first propose a specific
formulation of the proportional fairness in the context of an infinite
ergodic model. Subsequently, we show that the solution of this
problem asymptotically coincides  with a  meaningful
maximization of the sum of the logarithms of the throughputs in a
finite network when its size tends to infinity.
\subsection{MAC Policies}
Denote  by $\calS_{a}$ the {\em shift} operator on  $\Phi,\bar\Phi$. It
translates all the network nodes by the vector $-a$:
$\calS_a\{X_i\}=\{X_i-a\}$ and similarly for the receivers, preserving
all the nodes characteristics. We extend this operator to all (random
or deterministic) subsets $A\subset\R^2$ by defining $\calS_aA=\{x-a: x\in A\}$.
An important assumption  in the infinite model is that
all nodes  behave in the same way if they thy see the same
configuration of nodes in $\Phi,\bar\Phi$. %This principle can be formulated as some general  {\em MAC policy}.
More precisely, by {\em Translation invariant MAC policy},
we mean a policy where all nodes $X_i$ set their MAPs to
$p_i=\psi(\calS_{X_i}\Phi,\calS_{X_i}\bar\Phi)$, where  $\psi(\cdot)$
is some given function which takes as its argument the network geometry
$(\Phi,\bar\Phi)$ and has its values in $[0,1]$.
In other words, any node $X_i$, in order to chose its MAP,
applies the same {\em MAC policy} $\psi$ evaluated for the  network geometry
``seen'' from its point of view, i.e. from $X_i$.

Usually a given node will only have some partial information
about the location of other nodes, e.g. limited to some geometric vicinity.
The notion of ``local spatial information'' can be formalized
using the notion of {\em stopping set} $S=S(\Phi,\bar\Phi)$
(cf. \cite[Definition~1.9]{stochproc-wireless.baccelli-blaszczyszyn09stochastic-geometry-wireless-networks-1}).
This is a subset of the plane $\R^2$ such that  for any observation window  $A$
one can determine whether $S(\Phi, \bar\Phi)\subset A$ when knowing
only the points of $\Phi,\bar\Phi$ in $A$.%
~\footnote{The notion of
  stopping set (with respect to a spatial  point pattern)
is a spatial analogue of the stopping time for a temporal process.}
The stopping set  $S$ models the region in which the locations of nodes
are known to a (hypothetical) observer located at the origin.
In this context, it is natural (but not necessary) to assume that $S$ is some
neighborhood of the origin.
The simplest examples of $S(\Phi,\bar\Phi)$ are disks
centered at the origin of fixed radius $R$ or of  radius equal to
the distance to the $n-$th closest transmitter, or $n-$th  closest receiver
or $n-$th closest node (regardless whether it is a transmitter or a receiver).

For a given stopping set $S=S(\Phi,\bar\Phi)$ we consider the following class
of {\em MAC policies with local spatial information $S$}. By this we mean
a (translation invariant) MAC policy
$\psi$ satisfying the following constraint
\begin{equation}\label{e.pi-S}
\psi(\Phi,\bar\Phi)=\psi(\Phi\cap S, \bar\Phi\cap S).
\end{equation}
Note that in order to apply such a policy and evaluate its MAP
$p_i=\psi(\calS_{X_i}\Phi,\calS_{X_i}\bar\Phi)=
\psi(\calS_{X_i}\Phi\cap S_i, \calS_{X_i}\bar\Phi\cap S_i)$,
node $X_i$ needs to know only the locations of
the other nodes in the stopping set $S_i:=S(\calS_{X_i}\Phi,\calS_{X_i}\bar\Phi)$
(e.g. in a fixed disk around it,
the disk up to its  $n-$th closest neighbor, etc.)
\subsection{Infinite Network Optimization under Spatial Constraints}\label{ss.ergodic}
Denote by  $\mathbb{P}^0$ the
Palm distribution  of the stationary marked point
process $\tilde\Phi$. Recall that
almost surely  under $\mathbb{P}^0$ there is a node located at the origin $X_0=0$, called the {\em typical node}. By  Slivnyak's
theorem, in the case of the Poisson process, this is just an ``extra''
node added at the origin to the stationary configuration of nodes,
with all its characteristics independent and distributed identically
as for all other nodes. Denote by $\E^0$ the expectation with respect
to $\Pr^0$.

Let a  stopping set $S=S(\Phi,\bar\Phi)$ be given.
Denote by ${\bf PF}^S$ the following
maximization problem in the MAC policy $\psi(\cdot)$
\begin{align}
{\rm {\bf PF}^S:} \quad \maximize& \mathbb{E}^0[\log(p_0 q_0)]  \label{eqn:prop-fair-objective-ergodic} \\
 \subjectto & 0 \leq p_i=\psi(S_{X_i}\Phi,S_{X_i}\bar\Phi) \leq
 1\nonumber \\
\operatornamewithlimits{and~~~}& \psi(\cdot)
\operatornamewithlimits{satisfying~}~(\ref{e.pi-S}).
  \nonumber
\end{align}
We call ${\bf PF}^S$ {\em the proportional fair Aloha problem
with spatial information $S$}.

Note that under $\mathbb{P}^0$ we have $p_0=\psi=\psi(\Phi,\bar\Phi)$
and that~(\ref{eqn:prop-fair-objective-ergodic}) corresponds to
the {\em maximization of the expected logarithm of the throughput of
  the typical node}. The fact that the maximization is done  with respect to
a policy $\psi(\cdot)$ applied by all the nodes in the network makes
it non-trivial, despite the fact that we maximize the utility
function for just one node $X_0=0$. % ($p_0=1$ is not necessarily a solution).
Note also that $\mathbb{E}^0[\log(p_0 q_0)]\le 0$.

For a given  stopping set $S=S(\Phi,\bar\Phi)$ consider  the
following MAC policy. Define $\psi^S$  as  the (unique) solution of
 \begin{equation}
 \label{eqn:opt-p0}
 \frac{1}{\psi}  = \sum_{y_j\in S, j\not=0}\frac{1}{1 + b_{0j} - \psi}
+\int\limits_{y \in \mathbb{R}^2\setminus S}\frac{\lambda{\rm d}y}{1+|y|^\beta/Tr^{\beta} - \psi}
 \end{equation}
in $\psi$ provided
\begin{equation}
 \label{eqn:bountfp}
 a_i:=
\sum_{y_j\in S, j\not=0}\frac{1}{b_{0j}}+\int_{y \in \mathbb{R}^2\setminus S}\frac{\lambda{\rm d}y}{|y|^\beta/Tr^{\beta}} >1
 \end{equation}
and  $\psi^S = 1$  otherwise.
The existence and uniqueness follow
from the fact that the L.H.S. of (\ref{eqn:opt-p0})
decreases from $\infty$ to 1 w.r.t. $\psi$ on $[0,1]$
whereas the right hand side (R.H.S.) increases to
$a_i$ on $[0,1]$, and from the continuity of these two functions.

Note that $\psi^S$ satisfies~(\ref{e.pi-S}). In fact
$\psi^S(\Phi,\bar\Phi)=\psi(\bar\Phi\cap S)$ depends only on the configuration of
receivers in $S$.
%(however $S$ may depend on the
%receivers~$\Phi$~\footnote{e.g. in the case when $S$ is the disk centered at the origin with the closest
%transmitter on its boundary;}).

We now state now the main structural result of this paper.
Its proof is given in Appendix.
\begin{theorem}\label{t:opt-pi}
For all given stopping sets $S=S(\Phi,\bar\Phi)$, the MAC policy
$\psi^S$ defined by the fixed point equation (\ref{eqn:opt-p0})
is a solution of the  proportional fair Aloha problem ${\bf PF}^S$
with spatial information $S$. For this MAC policy $-\E^0[\psi^S q_0]<\infty$.
Moreover, for any  MAC policy $\psi'(\cdot)$ solving  ${\bf PF}^S$
we have  $\psi'(\Phi,\bar\Phi)=\psi^S(\Phi,\bar\Phi)$ for
almost all realizations of $(\Phi,\bar\Phi)$.
\end{theorem}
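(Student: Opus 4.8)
The plan is to reduce the optimization over policies $\psi$ to a pointwise strictly concave problem whose first-order condition is exactly the fixed point equation (\ref{eqn:opt-p0}). First I would substitute the expression (\ref{eqn:succ-prob}) for $q_0$ into the objective (\ref{eqn:prop-fair-objective-ergodic}). Writing $p_0=\psi(\Phi,\bar\Phi)$ and $p_j=\psi(\calS_{X_j}\Phi,\calS_{X_j}\bar\Phi)$, this gives
\[
\E^0[\log(p_0 q_0)] = \E^0[\log\psi(\Phi,\bar\Phi)] + \E^0\Big[\sum_{j\ne 0}\log\Big(1-\frac{p_j}{1+b_{j0}}\Big)\Big] - \mu T r^\beta W.
\]
The difficulty is that the interference term couples the typical node's utility to the MAPs $p_j$ of all other nodes, which are set by the \emph{same} policy; a direct variation of $\psi$ perturbs every $p_j$ at once, so the problem does not obviously decouple.

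The key step is to remove this coupling by symmetrization. I would apply the mass transport principle (Neveu's exchange formula) under $\Pr^0$ to the interference term, transporting the weight $\log(1-p_j/(1+b_{j0}))$ from the ordered pair $(j,0)$ to $(0,j)$. Using translation invariance of $\psi$ together with the fact that $b_{j0}=|X_j-y_0|^\beta/(Tr^\beta)$ becomes $b_{0j}=|y_j|^\beta/(Tr^\beta)$ after the swap (since $X_0=0$), the interference term is rewritten as $\E^0[\sum_{j\ne 0}\log(1-p_0/(1+b_{0j}))]$, in which only the \emph{central} MAP $p_0=\psi(\Phi,\bar\Phi)$ appears. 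The objective then becomes, up to the additive constant, $\E^0\big[\log\psi+\sum_{j\ne 0}\log(1-\psi/(1+b_{0j}))\big]$, and every occurrence of $\psi$ is now evaluated at the same configuration.

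Because the constraint (\ref{e.pi-S}) forces $\psi$ to be measurable with respect to $\mathcal{G}_S:=\sigma(\Phi\cap S,\bar\Phi\cap S)$, I would condition on $\mathcal{G}_S$ and optimize pointwise: for each realization of the configuration inside $S$, maximize $f(\psi)=\log\psi+\E^0[\sum_{j\ne 0}\log(1-\psi/(1+b_{0j}))\mid\mathcal{G}_S]$ over $\psi\in(0,1]$. Each summand is concave in $\psi$ (as $1+b_{0j}>1\ge\psi$), so $f$ is strictly concave with a unique maximizer. Setting $f'(\psi)=0$ yields $1/\psi=\E^0[\sum_{j\ne 0}(1+b_{0j}-\psi)^{-1}\mid\mathcal{G}_S]$. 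Splitting the sum into the points inside $S$, which are $\mathcal{G}_S$-measurable, and those outside $S$, and invoking the strong Markov property of the PPP with respect to the stopping set $S$ (conditionally on $\mathcal{G}_S$, the receivers of $\bar\Phi$ outside $S$ form a Poisson process of intensity $\lambda$ on $\R^2\setminus S$), Campbell's formula turns the outside contribution into $\int_{\R^2\setminus S}\lambda\,\mathrm{d}y/(1+|y|^\beta/(Tr^\beta)-\psi)$. This is precisely (\ref{eqn:opt-p0}). Since $f'(1)=1-a_i$ with $a_i$ as in (\ref{eqn:bountfp}), an interior root exists iff $a_i>1$; otherwise $f$ is increasing on $(0,1]$ and the maximizer is the boundary value $\psi^S=1$.

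Strict concavity yields the uniqueness claim: any other solution $\psi'$ must coincide with $\psi^S$ for $\Pr^0$-almost every configuration inside $S$, hence $\Pr^0$-a.s. For finiteness of the optimal value, I would bound $f$ from below, using that $\E^0[\sum_{j\ne 0}p_0/(1+b_{0j})]$, and hence $\E^0[\sum_{j\ne 0}\log(1-p_0/(1+b_{0j}))]$, is finite because $\int_{\R^2}\lambda|y|^{-\beta}\,\mathrm{d}y$ converges at infinity for $\beta>2$ while $\psi^S$ stays bounded away from the poles $1+b_{0j}$. I expect the main obstacle to be the rigorous justification of the mass-transport step together with the interchange of expectation and pointwise maximization: one must verify the integrability making the exchange formula applicable to the (negative) logarithmic weights, and a measurable-selection argument guaranteeing that the per-configuration maximizer is an admissible $\mathcal{G}_S$-measurable policy which actually attains the supremum of the original functional, not merely of the relaxed problem.
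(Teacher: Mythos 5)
Your proposal follows essentially the same route as the paper's appendix proof: the mass transport principle to recenter the interference sum on the typical node, the strong Markov property plus Campbell's formula to reduce ${\bf PF}^S$ to a pointwise (per realization of $\bar\Phi\cap S$) strictly concave maximization whose first-order condition is exactly (\ref{eqn:opt-p0}) (with $\psi^S=1$ when $a_i\le1$), and uniqueness via the uniqueness of the pointwise maximizer combined with finiteness of the optimal value. The one place where the paper is slicker is the finiteness claim: rather than bounding the objective at $\psi^S$ directly (your sketch needs $\psi^S/(1+b_{0j})$ uniformly bounded away from $1$, which is delicate for general stopping sets), the paper simply notes that the expectation is $>-\infty$ for a constant policy $\psi\equiv c\in(0,1)$ and hence also at the pointwise maximizer $\psi^S$.
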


\subsection{Extended Window Approach}
The proportional fair medium
access problem is usually stated as follows in the context of a {\em finite
subset of network nodes}. Consider some selected (arbitrarily chosen) nodes
$X_1,\ldots,X_N$ from $\Phi$, with their MAPs $p_i$ and success
probabilities $q_i$ given by~(\ref{eqn:succ-prob}). Consider  the
optimization problem
\begin{align}
{\rm {\bf PF}^{\ast}:} \quad \maximize & \sum_{i\in[N]} \log{(p_i q_
  i)}, \label{eqn:prop-fair-objective}\ \inargument \{p_i\}\\
 \subjectto & 0 \leq p_i \leq 1, \forall i\in[N], \nonumber
\end{align}
where $[N]=\{1,\ldots,N\}$.
Note that here, we do not restrict ourselves to 
translation invariant MAC policies. 
${\rm {\bf PF}^{\ast}}$
is a convex separable optimization problem. The optimal MAPs can be
characterized as follows, cf.~\cite{BacSing}:
the unique solution of $\rm {\bf PF}^{\ast}$
is the unique solution $p_i$, $1 \leq i \leq N$, of
 \begin{equation}
\label{eqn:opt-p-finite}
%\label{eqn:prop-fair-objective-ergodic-finte}
 \frac{1}{p_i} = \sum_{j  \in [N]\setminus \{i\}}\frac{1}{1 + b_{ij} - p_i},
 \end{equation}
provided  $a_i= \sum_{j \in [N]\setminus \{i\}}1/b_{ij} > 1$
and  $p_i = 1$  otherwise.

The thermal noise $W$ appears
merely in a constant additive term in the
objective function. So, this does not affect the optimal MAPs. For the
same reason interferers $X_j$, $j\not\in[N]$, external to the selected subset
of nodes $X_1,\ldots,X_N$, do  not affect the optimal MAPs of these
nodes. As in the previous section,
the optimal MAP of a transmitter $i$ is only a function of other
receivers' locations $\{y_j\}_{j \in [N]\setminus \{i\}}$. In
particular, given $\{y_j\}_{j \in [N]\setminus \{i\}}$, $p_i$
does not depend on $\{X_j\}_{j \in [N]\setminus \{i\}}$.

Now, consider a bounded observation window $A$ on the plane and the solution
$\{p^A_i:X_i\in A\}$ of the ${\rm {\bf PF}^{\ast}}$ problem posed for
the (finite set of) nodes $X_i\in A$. Call it ${\rm {\bf PF}^{A}}$
problem. The following corollary follows immediately from of the
above characterization of the solution of~${\rm {\bf PF}^{\ast}}$:
\begin{corollary}
\label{cor:eqaciter}
For any given $X_i\in\Phi$,  when $A$ increases to $\R^2$,
$p_i^A$ converges to $p_i^{\R^2}$ which is the unique solution (in $p_i$) of
(\ref{eqn:opt-p-finite}) with the summation carried out over
all $j \neq i$.
%\begin{equation}
%%%%%\label{eqn:opt-p-infinite}
%\label{eqn:opt-p-finite}
 %\frac{1}{p_i} = \sum_{j\not=i}\frac{1}{1 + b_{ij} - p_i}
 %\end{equation}
%provided  $\sum_{j\setminus \{i\}}1/b_{ij} > 1$
%and  $p_i^{\R^2} = 1$  otherwise.
\end{corollary}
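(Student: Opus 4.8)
The plan is to read each finite problem off its defining fixed-point equation and to show that the associated curves converge uniformly, so that their intersection points, the optimal MAPs, converge. Fix a realization of $(\Phi,\bar\Phi)$ and a node $X_i$, and for a window $A$ write $F_A(p) = \sum_{X_j\in A,\,j\neq i}\frac{1}{1+b_{ij}-p}$, so that by the characterization of the solution of $\mathbf{PF}^A$ the MAP $p_i^A$ is the unique root in $(0,1)$ of $1/p=F_A(p)$ when $a_i^A:=F_A(1)=\sum_{X_j\in A,\,j\neq i}1/b_{ij}>1$, and $p_i^A=1$ otherwise. The map $p\mapsto 1/p$ is continuous and strictly decreasing from $+\infty$ to $1$ on $(0,1]$, while $F_A$ is continuous and strictly increasing on $[0,1]$; this gives uniqueness of the root and will let me localize it via the intermediate value theorem.

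First I would control the right-hand side in the limit. Take any increasing sequence of windows $A_n\uparrow\R^2$ (say disks of radius $n$). Since $1+b_{ij}-p\ge b_{ij}$ for $p\le 1$, I have $\frac{1}{1+b_{ij}-p}\le\frac{1}{b_{ij}}$ uniformly in $p\in[0,1]$, so the tail obeys $0\le F_{\R^2}(p)-F_{A_n}(p)\le\sum_{X_j\notin A_n,\,j\neq i}1/b_{ij}$. The decisive point is that the full sum $a_i=\sum_{j\neq i}1/b_{ij}$ equals, up to the constant $Tr^\beta$, the path-loss shot noise $\sum_{j\neq i}|X_j-y_i|^{-\beta}$, which is almost surely finite for a homogeneous PPP when $\beta>2$ (the far-field series converges precisely because $\beta>2$, and only finitely many interferers lie in any bounded neighbourhood of $y_i$). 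Hence its tail tends to $0$ as $n\to\infty$, uniformly in $p\in[0,1]$, which shows at once that $F_{\R^2}$ is finite and continuous on $[0,1]$ and that $F_{A_n}\to F_{\R^2}$ uniformly on $[0,1]$.

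With uniform convergence in hand the conclusion is routine. In the degenerate case $a_i\le 1$ one has $a_i^{A_n}\le a_i\le 1$ for every $n$, so $p_i^{A_n}=1=p_i^{\R^2}$ and there is nothing to prove. When $a_i>1$, the limit equation $1/p=F_{\R^2}(p)$ has a unique root $p_i^{\R^2}\in(0,1)$; setting $h(p)=1/p-F_{\R^2}(p)$ and fixing $\delta>0$ with $p_i^{\R^2}\pm\delta\in(0,1)$, strict monotonicity gives $h(p_i^{\R^2}-\delta)>0$ and $h(p_i^{\R^2}+\delta)<0$. By uniform convergence the same strict inequalities hold with $F_{A_n}$ in place of $F_{\R^2}$ for all large $n$ (so in particular $a_i^{A_n}>1$ and the interior root exists), and the intermediate value theorem forces $p_i^{A_n}\in(p_i^{\R^2}-\delta,\,p_i^{\R^2}+\delta)$. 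As $\delta$ is arbitrary, $p_i^{A_n}\to p_i^{\R^2}$. I would also record the monotonicity $A\subseteq A'\Rightarrow F_A\le F_{A'}\Rightarrow p_i^A\ge p_i^{A'}$, which shows the convergence is in fact monotone decreasing and independent of the chosen exhausting sequence.

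The main obstacle is the single analytic input of the second paragraph: guaranteeing that the infinite series defining $F_{\R^2}$ converges and that the partial sums converge uniformly on $[0,1]$. Everything else is elementary real analysis, but this is exactly where the hypothesis $\beta>2$ is essential, and it is what promotes the purely algebraic finite-$N$ characterization to a meaningful statement about the infinite Poisson network; without the almost-sure finiteness of $a_i$ the limiting fixed-point equation would be vacuous.
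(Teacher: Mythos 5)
Your proposal is correct and follows the same route the paper intends: the paper states the corollary as following ``immediately'' from the characterization (\ref{eqn:opt-p-finite}) of the solution of ${\bf PF}^{\ast}$, and your argument simply makes this rigorous by passing to the limit in the fixed-point equation. The details you supply --- almost-sure finiteness of $a_i=\sum_{j\neq i}1/b_{ij}$ as a path-loss shot noise for $\beta>2$, the resulting uniform convergence of $F_{A_n}$ to $F_{\R^2}$ on $[0,1]$, the intermediate-value localization of the root, and the monotonicity $p_i^A\ge p_i^{A'}$ for $A\subseteq A'$ which makes the limit independent of the exhausting sequence --- are exactly the steps the paper leaves implicit, and they are all sound.
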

\begin{remark}
Note that $p_i^{\R^2}=\psi^{\R^2}(\calS_{X_i}\Phi,\calS_{X_i}\bar\Phi)$, i.e.,
the  solution of the  proportional fair Aloha problem
with complete  geometry information ($S=\R^2$)
considered in Section~\ref{ss.ergodic}
prescribes MAPs that are equal to the limits of the solutions
of the finite problem  ${\rm {\bf PF}^{A}}$ when $A\to \R^2$.
%The assumption made in the former optimization, that the MAPs $p_i$
%are calculated using some common policy $\pi(\cdot)$ does
%not lead to a sub-optimal solution.
\end{remark}

Finally, the maximization of the mean utility of
the typical node~(\ref{eqn:prop-fair-objective-ergodic})  can be also
interpreted (via ergodicity) as the spatial average of node utilities
in the extended window. Indeed, note that the maximization of
$\sum_{X_i\in A} \log(p_iq_i)$ for a given bounded window  $A$ is equivalent
to that of $\Theta_A=1/(\lambda |A|)\sum_{X_i\in A} \log(p_iq_i)$, where
$|A|$ is the surface of $A$ and $\lambda$ is the transmitter  density.
When $A\to \R^2$, $\Theta_A$ converges to
$\mathbb{E}^0[\log(p_0 q_0)]$ in the case when
$\psi_i=\psi(\calS_{X_i}\Phi,\calS_{X_i}\bar\Phi)$.
%\vspace{-.3cm}
%\input{examples.tex}
\subsection{Examples of Stopping Sets}
\label{sec:exa1}
We illustrate this via a few examples. We limit ourselves to
disk based sets, which makes sense in the isotropic
and omnidirectional setting considered here.
But the proposed framework is quite versatile and
could accommodate more general situations.
We start with deterministic stopping sets and then consider random ones.
In both cases, we illustrate the continuum alluded to above
by increasing levels of information.

Below, we will denote by $R_p$ the distance between $X_0$, the tagged
transmitter, and its $p$-th closest receiver, excluding $y_0$, with
$p$ a positive integer. Let $F(\psi ,S)$ denote the R.H.S.
of~\eqref{eqn:opt-p0} for a general stopping set $S$. 
With this notation, the tagged transmitter's MAP satisfies the fixed point equation
\begin{equation}
\label{eq:bfpe}
\frac 1 {\psi} = F(\psi ,S).
\end{equation}
We will use the following notation:
\begin{eqnarray*}
D(\psi,S) & = &
\sum_{j \neq 0: y_j \in S} \frac{1}{\frac{(|y_j|/r)^{\beta}}{  T}  + 1-\psi },\\
C(\psi,x) & = & 
2 \pi \lambda r^2 \int_x^{\infty}\frac{s}{\frac{s^{\beta}}{  T} + 1 - \psi } {\rm d}s.
\end{eqnarray*}

\subsubsection{$S = \emptyset$}
This is the case where transmitters have no topological
information at all. In this case, all transmitters use an identical MAP
given by (\ref{eq:bfpe}) with
%\begin{equation}
%\label{eqn:MAP-const}
$F(\psi ,S) = C(\psi ,0).$
%2 \pi \lambda r^2 \int_0^{\infty}\frac{s}{\frac{s^{\beta}}{  T} + 1 - \psi } {\rm d}s.
%\end{equation}
Notice that in this case, $\psi  < 1$
irrespective of the node density $\lambda$.
Further, this MAP is different from the one that maximizes the density of successful
transmissions~(see~\cite[Section~16.3.1.1]{stochproc-wireless.baccelli-blaszczyszyn09stochastic-geometry-wireless-networks-2}), which makes sense as the objective functions of the parametric optimization are different.
For the special case $\beta = 4$, we get:
\begin{equation*}
 \psi^{\emptyset}  = \frac{\sqrt{1+4\alpha^2} -1}{2 \alpha^2}, \quad
\mbox{where} \quad
\alpha = \frac{\pi^2 \lambda r^2 \sqrt{T}}{2}.
\end{equation*}
\subsubsection{$S = B_0(R)$}
In this example, each transmitter knows the locations of
all receivers in a disk of radius $R$ centered on its location,
with $R$ a constant parameter in $(0,\infty)$.
This parameter can be tuned to control the mean cardinality of
the set of receivers taken into account in the adaptive control.
An inconvenience of this setting is that this cardinality is random
(it follows a Poisson law). The parameter $R$ can also
be used in order to upper-bound by $R^\beta$
the mean path-loss between the tagged
transmitter and this set of receivers.
This will be important in the proposed implementation
(see Section \ref{sec:impl}).
In this case $\psi$ is solution of (\ref{eq:bfpe}) with
\begin{align}
\label{eq:b0R}
F(\psi ,S) = D(\psi ,B_0(R))+ C(\psi ,\frac R r).
% =& \sum_{j \neq 0: y_j \in B_0(R)} \frac{1}{\frac{(|y_j|/r)^{\beta}}{  T}  + 1-\psi }\nonumber \\
%          &+ 2 \pi \lambda r^2 \int_{R/r}^{\infty}\frac{s}{\frac{s^{\beta}}{  T} + 1 - \psi } {\rm d}s.
\end{align}
For $\beta = 4$, we can further simplify the second term as
\begin{equation} \hspace{-.2cm}C(\psi ,\frac R r) =  
\label{eq:beta4}
\frac{\pi \lambda r^2 \sqrt{T}}{\sqrt{1-\psi }} \left(\frac{\pi}{2} - \tan^{-1}\left(\frac{(R/r)^2}{\sqrt{T(1-\psi )}}\right)\right).
\end{equation}

\subsubsection{$S = \mathbb{R}^2$}
Here, each transmitter knows and uses the
location information of all receivers in the network.
In this case, $F(\psi ,S)$ is equal to the R.H.S.
of~\eqref{eqn:opt-p-finite} with the summation carried out over
all $j \neq i$. This is the case studied in \cite{BacSing}.

\subsubsection{$S = B_0(R_1)$}
This is our first random set example: the tagged transmitter only
knows/uses the location of it {\em nearest neighboring receiver},
excluding $y_0$. We then have (\ref{eq:bfpe}) with
\begin{equation}
\label{eq:firstexa}
 F(\psi ,S) = \frac{1}{\frac{(R_1/r)^{\beta}}{  T} + 1-\psi } +
C(\psi ,\frac {R_1} r).
% 2 \pi \lambda r^2 \int_{R_1/r}^{\infty}\frac{s}{\frac{s^{\beta}}{  T} + 1 - \psi } {\rm d}s.
\end{equation}
It is easy to see that $\psi (R_1)$ increases with $R_1$.
\subsubsection{$S = B_0(R_k)$}
Here $k$ is a positive integer. The case $k=0$ boils down to $S=\emptyset$
and the case $k=\infty$ to $S=\R^2$. In terms of implementation,
this setting is better than $S=B_0(R)$ 
since it allows one to tune the actual cardinality of the set of receivers
taken into account in the MAP control.
However there
is no guarantee on the path-loss between the transmitter and the nodes in this set.
In this case, we have (\ref{eq:bfpe}) with
\begin{equation}
 F(\psi ,S) = \sum_{p=1}^k
\frac{1}{\frac{R_p^{\beta}}{r^\beta  T} + 1-\psi } +
C(\psi,\frac{R_k}r).
%2 \pi \lambda r^2 \int_{\frac{R_k}r}^{\infty}\frac{s}{\frac{s^{\beta}}{  T} + 1 - \psi } {\rm d}s.
\end{equation}
\subsubsection{$S = B_0(R_k)\cap B_0(R)$}
This two parameter stopping set enjoys the two practical properties mentioned above:
(i) the number of receivers taken into account in the control of each transmitter
is upper bounded by $k$; (ii) the mean path-loss to any receiver taken into account in the control
is upper bounded by $R^\beta$.
In this case, we have (\ref{eq:bfpe}) with
\begin{eqnarray}
\label{eq:lastexa}
 F(\psi ,S)  = \hspace{-.7cm} \sum_{p=1}^{\min(k,B_0(R))}
\frac{1}{\frac{R_p^{\beta}}{r^\beta T} + 1-\psi }
+C(\psi,\frac{\min(R_k,R)}r).
%\nonumber \\
%& & +
%2 \pi \lambda r^2 \int_{\frac{\min(R_k,R)}r}^{\infty}\frac{s}{\frac{s^{\beta}}{  T} + 1 - \psi } {\rm d}s.
\end{eqnarray}
If $\beta=4$ 
the R.H.S.s of (\ref{eq:firstexa})-(\ref{eq:lastexa}) can be simplified 
using (\ref{eq:beta4}).

%\vspace{-.1cm}
%\input{performance.tex}
\section{Performance of the Optimal Control}
\label{ss:stocg}
\subsection{MAP Distribution}
\label{ss:mapdis}
Using Theorem~\ref{t:opt-pi} and monotonicity arguments,
it is not difficult to show that for all $0 \leq \rho \leq 1$
\begin{equation}
\label{eq:key}
\psi  > \rho \quad \mbox{iff} \quad
\sum_{\overset{j \neq 0:}{y_j \in S}}\frac{\rho}{b_{0j} + 1- \rho} + I(\rho,S) < 1,
\end{equation}
where
$I(\rho,S) = \lambda\int_{y \in \mathbb{R}^2\setminus S}\frac{\rho}{|y|^\beta/Tr^{\beta} + 1 - \rho}{\rm d}y.$
We now use this to derive the distribution of the optimal MAP.
For all $0 \leq \rho \leq 1$ and stopping sets $S$, let
%$L_{(\rho,S)}:\mathbb{R}^2 \times \mathbb{R}^2 \rightarrow \mathbb{R}^+ $
%be defined as follows
\[L_{(\rho,S)}(x,y) = \frac{\rho}{\frac{| x- y|^{\beta}}{T r^{\beta}} + 1 - \rho} \mathbf{1}(y \in S).\]
The indicator $ \mathbf{1}(y_i \in S)$ can be thought as a mark associated
with $y_i \in \bar{\Phi}$. However, these marks are not independent
unless $S$ is a constant set, e.g., if $S = B_0(R_1)$.
For all $0 \leq \rho \leq 1$ and stopping sets $S$, the shot noise field
$J_{\bar{\Phi}}(\rho,S)$ associated with the
response function $L_{(\rho,S)}(0,y)$ and the
marked point process $\bar{\Phi}$ is
$J_{\bar{\Phi}}(\rho,S) = \sum_{i} L_{(\rho,S)}(0,y_i)$.
Notice that this shot noise is not that representing the interference
at the origin. It rather measures the effect of the presence of a transmitter
at $0$ on the set of receivers in $S$. We have the following connection
between the optimal MAP distribution and the shot noise
$J_{\bar{\Phi}}$. For all $0 < \rho < 1$,
$\mathbb{P}^0(\psi  > \rho) = \mathbb{P}^0\left(J_{\bar{\Phi} \setminus {\{y_0\}}}(\rho,S) < 1 - I(\rho,S)\right),$
and
$\mathbb{P}^0(\psi  = 1) = \mathbb{P}^0\left(J_{\bar{\Phi} \setminus {\{y_0\}}}(1,S) < 1 - I(1,S)\right)$.
%\begin{align*}
%\mathbb{P}^0(\psi  > \rho) &= \mathbb{P}^0\left(J_{\bar{\Phi} \setminus {\{y_0\}}}(\rho,S) < 1 - I(\rho,S)\right), \\
%\mbox{and} \ \ \ \ \ \ \ \ \ \ \ \ \ \ \ \ \ & \ \ \ \ \ \ \ \ \ \ \ \ \ \ \ \ \ \ \ \ \ \ \ \ \ \ \ \ \ \ \ \ \ \ \ \ \ \ \ \ \ \ \ \ \ \ \ \ \ \ \ \ \\
%\mathbb{P}^0(\psi  = 1) &= \mathbb{P}^0\left(J_{\bar{\Phi} \setminus {\{y_0\}}}(1,S) < 1 - I(1,S)\right).
%\end{align*}
From Slivnyak's theorem~\cite[Theorem~1.13]{stochproc-wireless.baccelli-blaszczyszyn09stochastic-geometry-wireless-networks-1},
$\mathbb{P}^0\left(J_{\bar{\Phi} \setminus {\{y_0\}}}(\rho,S) \! < \! 1 - I(\rho,S)\right) = \mathbb{P}\left(J_{\bar{\Phi}}(\rho,S) \! < \! 1 - I(\rho,S)\right)$,
for all $0 \leq \rho \leq 1$. Thus, we get:
\begin{theorem}
\label{thm41}
For all $0 < \rho < 1$,
$\mathbb{P}^0(\psi  > \rho) = \mathbb{P}\left(J_{\bar{\Phi}}(\rho,S) < 1- I(\rho,S)\right)$ and
$\mathbb{P}^0(\psi  = 1) = \mathbb{P}\left(J_{\bar{\Phi}}(1,S) < 1- I(1,S)\right).$
\end{theorem}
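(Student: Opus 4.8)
The plan is to obtain both identities directly from the iff-characterization (\ref{eq:key}), rewriting its left-hand side as a shot-noise functional of $\bar\Phi$ and then transferring from the Palm measure $\Pr^0$ to the stationary measure $\Pr$ by Slivnyak's theorem. This is exactly the chain of equalities outlined just before the statement, and the work lies in justifying the two transitions.

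First I would work under $\Pr^0$, where the typical transmitter sits at $X_0=0$ with receiver $y_0$, so that $b_{0j}=|y_j|^\beta/(Tr^\beta)$. With this, for $y_j\in S$ each summand $\rho/(b_{0j}+1-\rho)$ in (\ref{eq:key}) is exactly the response $L_{(\rho,S)}(0,y_j)$, whence $\sum_{j\neq 0,\,y_j\in S}\rho/(b_{0j}+1-\rho)=J_{\bar\Phi\setminus\{y_0\}}(\rho,S)$. Thus (\ref{eq:key}) reads $\{\psi>\rho\}=\{J_{\bar\Phi\setminus\{y_0\}}(\rho,S)<1-I(\rho,S)\}$, and taking $\Pr^0$-probabilities yields the Palm form $\Pr^0(\psi>\rho)=\Pr^0(J_{\bar\Phi\setminus\{y_0\}}(\rho,S)<1-I(\rho,S))$ for every $0<\rho<1$. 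The value $\rho=1$ is handled by specializing to $\rho=1$: there $L_{(1,S)}(0,y_j)=1/b_{0j}$ on $\{y_j\in S\}$, so $J_{\bar\Phi\setminus\{y_0\}}(1,S)+I(1,S)=a_0$ in the notation of (\ref{eqn:bountfp}), and the event $J_{\bar\Phi\setminus\{y_0\}}(1,S)<1-I(1,S)$ is precisely $\{a_0<1\}$. Since $\psi^S=1$ iff $a_0\le 1$ while $\Pr^0(a_0=1)=0$ by absolute continuity of the receiver distances, the strict and non-strict versions agree $\Pr^0$-almost surely, giving the second Palm identity.

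Next I would invoke Slivnyak's theorem for the Poisson process: under $\Pr^0$ the background configuration obtained by deleting the origin point has the same law as $\tilde\Phi$ under the stationary $\Pr$. Because $J_{\bar\Phi\setminus\{y_0\}}(\rho,S)$ and $I(\rho,S)$ are deterministic functionals of that background configuration, the two event probabilities are preserved when passing from $\Pr^0$ (origin removed) to $\Pr$ (no added point), which replaces $\bar\Phi\setminus\{y_0\}$ by $\bar\Phi$ and produces the stated right-hand sides.

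The one point requiring care, and the place where a general argument could break, is that $S$, the integral $I(\rho,S)$, and the marks $\one(y_j\in S)$ must be expressible through the background configuration alone, so that the Palm-to-stationary substitution is exact rather than merely distributional on the deleted atom. Since $S=S(\Phi,\bar\Phi)$ is a stopping set and, in all the examples at hand, is determined by $\bar\Phi\setminus\{y_0\}$ (a fixed disk, or the disk up to the $p$-th nearest receiver excluding $y_0$), its value is unchanged by adding or deleting the origin point, which is exactly what legitimizes the transfer. I would therefore record this measurability of $S$ with respect to the background as the operative hypothesis for general $S$, and note that the earlier caveat that the marks $\one(y_j\in S)$ need not be independent is irrelevant here, since the argument uses no independence of marks, only the law-preserving deletion guaranteed by Slivnyak.
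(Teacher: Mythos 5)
Your proof follows essentially the same route as the paper's: the characterization (\ref{eq:key}) identifies $\{\psi>\rho\}$ with the shot-noise event $\{J_{\bar\Phi\setminus\{y_0\}}(\rho,S)<1-I(\rho,S)\}$ under $\mathbb{P}^0$, and Slivnyak's theorem transfers this to the stationary law $\mathbb{P}$, exactly as in the text immediately preceding the theorem. Your two supplementary points---that $\mathbb{P}^0(a_0=1)=0$ reconciles the strict inequality at $\rho=1$ with the condition $a_0\le 1$ defining $\{\psi^S=1\}$, and that $S$ must be a functional of the configuration without the origin pair for the Slivnyak substitution to be exact---are glossed over by the paper and are correctly handled in your write-up.
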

\subsection{Examples}
\subsubsection{Deterministic Sets}
Since $\bar{\Phi}$ is a homogeneous Poisson point process, it follows
from~\cite[Proposition~2.6]{stochproc-wireless.baccelli-blaszczyszyn09stochastic-geometry-wireless-networks-1} that
the Laplace transform $\mathcal{L}_{J(\rho,S)}(s)$
of the shot noise $J_{\bar{\Phi}}(\rho,S)$ is:
\begin{equation}
\mathcal{L}_{J(\rho,S)}(s) = \exp \left(-\lambda \int_S\left(1 - \e^{-\frac{s\rho T r^{\beta}}{| y|^{\beta} + (1 - \rho)T r^{\beta}}}\right){\rm d} y \right)\!\!. \!\!
\label{eq:lapl}
\end{equation}
For example, when $S = B_0(R)$ for a fixed $R$,
\begin{equation*}
\mathcal{L}_{J(\rho,S)}(s) = \exp \left(-\pi\lambda \int_0^{R^2}\left(1 - \e^{-\frac{s\rho T r^{\beta}}{t^{\beta/2} + (1 - \rho)T r^{\beta}}}\right){\rm d} t \right).
\end{equation*}
Moreover, if $\beta = 4$, the Laplace transform
$\mathcal{L}_{J(\rho,S)}(s)$ can be further simplified as
$$\mathcal{L}_{J(\rho,S)}(s)  
 = e^{-2 \pi \lambda \sqrt{(1-\rho)T}r^2 \int_{v_R}^1 \frac{1 - \e^{-s\rho v^2/(1-\rho)}}{v^2 \sqrt{1 - v^2}} {\rm d} v },
$$
where
$
v_R =  {\sqrt{(1-\rho)T}r^2}/{\sqrt{R^4 + (1-\rho)Tr^4}}.
$
%The following theorem gives the distribution of the optimal MAP:
\begin{theorem}
\label{thm:map-distribution}
For all deterministic sets $S$,
the optimal MAP of the typical node has for distribution
\begin{equation}
\!\!\!\mathbb{P}^0(\psi  > \rho) =\frac 1 {2\pi}
\int_{-\infty}^{\infty}\mathcal{L}_{J(\rho,S)}(i w)
\frac{\e^{iw(1 - I(\rho,S))} -1}{iw} {\rm d} w,\!\!\!\!\!\!
\label{eq:Parsev}
\end{equation}
with $\mathcal{L}_{J(\rho,S)}(\cdot)$ given by~\eqref{eq:lapl}. Similarly, $\mathbb{P}^0\{\psi  = 1\}$
is given by the R.H.S. with $\rho$ replaced by $1$.
\end{theorem}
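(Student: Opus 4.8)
The plan is to reduce the statement to a Fourier inversion of the cumulative distribution function (c.d.f.) of the shot noise $J:=J_{\bar\Phi}(\rho,S)$, and then to substitute its closed-form Laplace transform. By Theorem~\ref{thm41} we already have $\mathbb{P}^0(\psi>\rho)=\mathbb{P}(J<c)$ with $c:=1-I(\rho,S)$, so it suffices to express the c.d.f.\ of $J$ at the point $c$ in terms of $\mathcal{L}_{J(\rho,S)}$. Two features make this tractable: for $0<\rho<1$ the response function $L_{(\rho,S)}(0,\cdot)$ is nonnegative, hence $J\ge 0$; and, $S$ being deterministic, the points of $\bar\Phi\cap S$ form a Poisson process on $S$, so $\mathcal{L}_{J(\rho,S)}(s)=\mathbb{E}[\e^{-sJ}]$ is given in closed form by the Laplace functional~\eqref{eq:lapl}. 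In particular $\mathcal{L}_{J(\rho,S)}(iw)=\mathbb{E}[\e^{-iwJ}]$ is the conjugate characteristic function of $J$.

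First I would recover the law of $J$ from its characteristic function by a Parseval-type inversion. Using $J\ge0$, write the indicator $\mathbf{1}_{[0,c)}$ as the inverse Fourier transform of $w\mapsto(\e^{iwc}-1)/(iw)$, insert it into $\mathbb{P}(J<c)=\mathbb{E}[\mathbf{1}_{[0,c)}(J)]$, and interchange $\mathbb{E}$ with the $w$-integral; this produces exactly the right-hand side of~\eqref{eq:Parsev}. Identifying the resulting kernel relies on the classical Dirichlet evaluation $\tfrac1\pi\int_0^W\tfrac{\sin(wa)}{w}\,\mathrm{d}w\to\tfrac12\operatorname{sgn}(a)$ as $W\to\infty$: under the expectation the truncated kernel converges to $\mathbf{1}(0<x<c)$ for every $x\notin\{0,c\}$, so that integrating against $\mathrm{d}F_J$ returns $\mathbb{P}(J<c)$. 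The claim for $\mathbb{P}^0(\psi=1)$ is the same argument run with $\rho=1$, where $L_{(1,S)}(0,\cdot)$ is still positive away from the origin and $\mathcal{L}_{J(1,S)}$ is the $\rho\to1$ form of~\eqref{eq:lapl}.

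The main obstacle is the rigorous justification of the interchange, because the inversion integral is only \emph{conditionally} convergent. For a bounded deterministic $S$ the variable $J$ carries a positive atom at the origin: the event $\{\bar\Phi\cap S=\emptyset\}$ has probability $\e^{-\lambda|S|}>0$, and applying the Riemann--Lebesgue lemma to the exponent of~\eqref{eq:lapl} shows $\mathcal{L}_{J(\rho,S)}(iw)\to\e^{-\lambda|S|}\neq0$ as $|w|\to\infty$. Hence the integrand decays only like $1/|w|$ and is not absolutely integrable. I would therefore interpret~\eqref{eq:Parsev} as the symmetric principal value $\lim_{W\to\infty}\int_{-W}^{W}$, apply Fubini on each truncated integral (valid since the truncated kernel is bounded and $J$ is integrable), and pass to the limit by dominated convergence using the uniform bound on the truncated Dirichlet kernel.

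The delicate point, and the part that must be pinned down, is the bookkeeping at the atoms of $F_J$. A symmetric-kernel inversion assigns only half weight to a point mass, so before the limit one must account separately for the atom $\{J=0\}$ of mass $\e^{-\lambda|S|}$ and verify that $c=1-I(\rho,S)$ is a continuity point of $F_J$ (so the boundary term at $c$ drops). Isolating this atom and inverting the remaining part of the law of $J$ is what ties the principal-value integral back to $\mathbb{P}(J<c)$, hence to $\mathbb{P}^0(\psi>\rho)$ in the form~\eqref{eq:Parsev}. Making this accounting explicit, uniformly over all deterministic $S$, is where I expect the argument to require the most care.
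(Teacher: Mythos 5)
Your proposal is correct and follows essentially the same route as the paper's own proof: reduction via Theorem~\ref{thm41} to the c.d.f.\ of the shot noise $J_{\bar\Phi}(\rho,S)$ at the point $1-I(\rho,S)$, followed by a Plancherel--Parseval inversion against the kernel $\bigl(\e^{iw(1-I(\rho,S))}-1\bigr)/(iw)$ with the substitution $\mathcal{F}_{J(\rho,S)}(w)=\mathcal{L}_{J(\rho,S)}(iw)$ from~\eqref{eq:lapl}. If anything you are more scrupulous than the paper, whose proof simply posits a density $g_\rho$ for $J$; your principal-value interpretation and the bookkeeping of the atom of $J$ at zero (of mass $\e^{-\lambda|S|}$ for bounded $S$, precisely the case where the density assumption fails) supply the rigor that the paper's two-line argument leaves implicit.
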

\begin{proof}
Let $g_{\rho}(\cdot)$ denote the density of 
$J_{\Phi}(\rho,S)$. Then
\begin{equation*}
\mathbb{P}^0(\psi  > \rho) = \int_0^{1 - I(\rho,S)}g_{\rho}(t) {\rm d} t
                         = \int_{-\infty}^{\infty} g_{\rho}(t) u(t)  {\rm d} t,
\end{equation*}
where $u(t) = 1$  if $ 0 \leq t \leq 1 - I(\rho,S)$ and 0 otherwise.
Now from the Plancherel-Parseval
theorem~(see~\cite[Lemma~12.1]{stochproc-wireless.baccelli-blaszczyszyn09stochastic-geometry-wireless-networks-1})
\[
\mathbb{P}^0(\psi  > \rho) = \frac{1}{2\pi}\int_{-\infty}^{\infty} \mathcal{F}_{J(\rho,S)}(w) \mathcal{F}^*_{u}(w) {\rm d} w,
\]
with $\mathcal{F}_{A}(w)=\mathbb{E}\exp(-iwA)$ the Fourier
transform of the real valued random
variable $A$ and $B^*$ the complex conjugate of $B$.
The claim follows after substituting
$ \mathcal{F}_{u}(w) = \frac{1 - \e^{-iw(1 - I(\rho,S))}} {iw}$
and $\mathcal{F}_{J(\rho,S)}(w) = \mathcal{L}_{J(\rho,S)}(iw)$.
Similarly, $\mathbb{P}^0(\psi  =  1)$.
\end{proof}
\subsubsection{$S = B_0(R_1)$}
Observe that
$
I(\rho,B_0(x)) = 2 \pi \lambda r^2 \int_{x/r}^{\infty}\frac{\rho s}{\frac{s^{\beta}}{  T} + 1 - \rho} {\rm d}s
$
is increasing in $\rho$ and decreasing in $x$.
For $0 \leq \rho \leq 1$, let
\[
\xi(\rho) = \inf\left\{x\ge 0:
\frac{\rho}{x^{\beta}/{T r^{\beta}} + 1 - \rho} + I(\rho,B_0(x)) < 1 \right\}.
\]
Clearly, $\xi(\rho)$ is increasing in $\rho$.
Using this and (\ref{eq:key}), we get that $\mathbb{P}^0(\psi  > \rho)= \mathbb{P}^0(R_1>\xi(\rho))$.
From Slivnyak's Theorem, $\mathbb{P}^0(R_1>\xi(\rho))=\mathbb{P}(R_1>\xi(\rho))$.
This together with the formula for the probability that
the ball with radius $\xi(\rho)$ contains no point of a PPP of intensity $\lambda$
give:
\begin{theorem}If $S = B_0(R_1)$, then
$\mathbb{P}^0(\psi  > \rho) = \exp(-\lambda \pi \xi(\rho)^2)$
for $\rho<1$
and
$\mathbb{P}^0(\psi  = 1) = \exp(-\lambda \pi \xi(1)^2).$
\end{theorem}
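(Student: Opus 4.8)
The plan is to specialize the general criterion~\eqref{eq:key} to the random stopping set $S=B_0(R_1)$ and to exploit the fact that, by the very definition of $R_1$ as the distance to the nearest receiver other than $y_0$, this set contains exactly one such receiver, sitting on its boundary at distance $R_1$ from the origin. First I would note that under $\mathbb{P}^0$ (with $X_0=0$) there is almost surely no receiver other than $y_0$ strictly inside $B_0(R_1)$ and exactly one at distance $R_1$; hence the sum $\sum_{j\neq 0:\,y_j\in S}\frac{\rho}{b_{0j}+1-\rho}$ in~\eqref{eq:key} collapses to the single term $\frac{\rho}{R_1^\beta/(Tr^\beta)+1-\rho}$, since $b_{0j}=|y_j|^\beta/(Tr^\beta)$. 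Combined with $I(\rho,B_0(R_1))$, the left-hand side of~\eqref{eq:key} then becomes a deterministic function of $R_1$, namely $f(x)=\frac{\rho}{x^\beta/(Tr^\beta)+1-\rho}+I(\rho,B_0(x))$, which is exactly the function entering the definition of $\xi(\rho)$.

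The second step is a monotonicity and threshold argument. I would verify that $f$ is continuous and strictly decreasing in $x$: its first term decreases in $x$, and $I(\rho,B_0(x))=\lambda\int_{|y|>x}\frac{\rho}{|y|^\beta/Tr^\beta+1-\rho}\,\mathrm{d}y$ decreases as the integration domain shrinks, both strictly for $0<\rho<1$; here $\beta>2$ guarantees that the integral is finite and that $f(x)\to 0$ as $x\to\infty$. Consequently $\{x:\,f(x)<1\}=(\xi(\rho),\infty)$, so the event described by~\eqref{eq:key} is precisely $\{R_1>\xi(\rho)\}$, the boundary case $R_1=\xi(\rho)$ being negligible because $R_1$ has a continuous law. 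This establishes $\mathbb{P}^0(\psi>\rho)=\mathbb{P}^0(R_1>\xi(\rho))$.

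It then remains to evaluate this probability. By Slivnyak's theorem the reduced Palm process of the transmitters, and hence (via the displacement theorem) that of the receivers $\{y_j:j\neq0\}$, is again a homogeneous PPP of intensity $\lambda$, so $R_1$ is distributed as the distance from the origin to the nearest point of such a process and $\mathbb{P}^0(R_1>x)=\mathbb{P}(R_1>x)$. The event $\{R_1>\xi(\rho)\}$ is exactly the event that the ball $B_0(\xi(\rho))$ is void of receivers, whose probability is the PPP void probability $\exp(-\lambda\pi\xi(\rho)^2)$; this yields the first formula.

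Finally, for the atom at $\psi=1$ I would run the same argument at $\rho=1$. The defining condition $\psi^S=1$, namely $a_0\le1$ with $a_0$ as in~\eqref{eqn:bountfp}, coincides with the $\rho=1$ instance of~\eqref{eq:key}, since $\frac{\rho}{b_{0j}+1-\rho}\to 1/b_{0j}$ and $I(\rho,S)\to I(1,S)$ as $\rho\to1$; the single-term collapse and the threshold argument then give $\{R_1>\xi(1)\}$ and hence $\mathbb{P}^0(\psi=1)=\exp(-\lambda\pi\xi(1)^2)$. The only genuinely delicate points are the single-term collapse of the sum, which relies on the stopping-set structure placing exactly the nearest receiver on the boundary of $B_0(R_1)$, and the strict monotonicity of $f$, which makes the inequality in~\eqref{eq:key} equivalent to a clean threshold crossing; the remainder is the standard Slivnyak-plus-void-probability computation.
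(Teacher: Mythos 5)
Your proposal is correct and follows essentially the same route as the paper: specializing the criterion~\eqref{eq:key} to $S=B_0(R_1)$ so the sum collapses to the single nearest-receiver term, using monotonicity in $x$ to rewrite the event as $\{R_1>\xi(\rho)\}$, and then applying Slivnyak's theorem together with the void probability of the PPP. Your write-up merely spells out details the paper leaves implicit (strict monotonicity of $f$, the null boundary event $\{R_1=\xi(\rho)\}$, and the $\rho=1$ case via the condition~\eqref{eqn:bountfp}), all correctly.
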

%\vspace{.2cm}
To illustrate the last theorem, assume that for a~(small enough)
$\rho$, $\xi(\rho) = 0$. This means that all the nodes should attempt
with probabilities lower bounded by $\rho$ in each slot. Similarly,
assume that $\xi(1) = R^{\ast}$. Then all the nodes for whom the closest
neighboring receivers are at distances greater than $R^{\ast}$ should
transmit in every slot.
%
%Let us conclude this example section by stressing that there is no
%difficulty in extendind the above analytical results to the
%more complex stopping sets considered in Section \ref{sec:exa1}.
%\vspace{-.2cm}
\subsection{Mean Utility}
\label{sec:mean-utility}
We now provide an analytical expression
for the mean optimized utility per unit space:
\begin{equation}
\label{eq:meanu}
\Theta^S= \lambda\left(\mathbb{E}^0[\log(\psi^S )]
+ \mathbb{E}^0[\log(q_0^S)]\right).
\end{equation}
Let $f(\cdot)$
denote the distribution of $\psi^S$ (see Section \ref{ss:mapdis}).
For all $t\in \R^2$,
let $f_t$ be the distribution of $\psi^S$ with an
extra receiver at $t$, i.e. the distribution of the solution of
\begin{eqnarray*}
\frac{1}{\psi}  = D(\psi,S)
%\sum_{y_j\in S, j\not=0}\frac{1}{1 + b_{0j} - \psi}
+ \frac{\one(t\in S)}{1 + \frac{|t|^\beta}{|r|^\beta T} - \psi}
+\int\limits_{y \in \mathbb{R}^2\setminus S}\frac{\lambda{\rm d}y}{1+|y|^\beta/Tr^{\beta} - \psi}
\end{eqnarray*}
if it exists and  $\psi^S = 1$  otherwise.
Notice that $f(.)$ and $f_t(.)$
both depend on $S$.
For each $S$ studied
above, one gets an analytical expression for $f_t$ by the same technique
as in \S \ref{ss:mapdis}.
%The main result of this subsection is:
\begin{theorem}
\label{thm44}
For all $S$, $\Theta^S$ is given by:
%the mean optimized utility per unit space is
\begin{align}
\label{eqn:mean-utility2}
\Theta^S = &
\lambda \int\limits_{[0,1]}
\log(u)f({\rm d}u) \\
          + & \lambda
\int\limits_{t\in \R^2}
\int\limits_{u\in [0,1]}
\log\left(1 - \frac{u }{1 + |t|^\beta/Tr^{\beta}} \right)
f_t({\rm d}u) {\rm d}t.
\!\! 
\nonumber
\end{align}
\end{theorem}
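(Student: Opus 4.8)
\emph{Proof plan.} The plan is to treat the two contributions to $\Theta^S$ in~\eqref{eq:meanu} separately. The term carrying $\E^0[\log(\psi^S)]$ is immediate: since $f$ is by definition the law of the typical node's optimal MAP $\psi^S$ under $\Pr^0$ (the distribution computed in Section~\ref{ss:mapdis}), we have $\E^0[\log(\psi^S)]=\int_{[0,1]}\log(u)\,f({\rm d}u)$, which is exactly the first summand of~\eqref{eqn:mean-utility2}. All the real work is in the term involving $\E^0[\log(q_0^S)]$.

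For that term I would start from the closed form~\eqref{eqn:succ-prob}, take logarithms, and write
\[
\log(q_0^S)=-\mu T r^\beta W+\sum_{j\neq 0}\log\Bigl(1-\frac{p_j}{1+b_{j0}}\Bigr),\qquad p_j=\psi^S(\calS_{X_j}\Phi,\calS_{X_j}\bar\Phi).
\]
The first (deterministic) piece only adds the constant $-\mu T r^\beta W$ and is set aside. The remaining object is the Palm expectation of a sum, over the points of $\Phi$, of a functional depending both on the location $X_j$ (through $b_{j0}=|X_j-y_0|^\beta/(Tr^\beta)$) and on the \emph{whole} configuration (through $p_j$). This is precisely the setting for the Campbell--Mecke formula. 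Using Slivnyak's theorem, $\Pr^0$ is the law of $\Phi\cup\{0\}$ with an independent mark at the origin; applying Mecke's equation to the remaining Poisson process then gives
\[
\E^0\Bigl[\sum_{j\neq 0}\log\Bigl(1-\frac{p_j}{1+b_{j0}}\Bigr)\Bigr]
=\lambda\int_{\R^2}\E\Bigl[\log\Bigl(1-\frac{p_x}{1+|x-y_0|^\beta/(Tr^\beta)}\Bigr)\Bigr]{\rm d}x,
\]
where the inner expectation is over the Poisson field augmented by the two transmitters at $0$ and $x$.

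The crux is identifying the law of the interferer's control $p_x$ inside this integral. Because $\psi^S$ is translation invariant and, by~\eqref{eqn:opt-p0}, depends only on the receiver points inside the stopping set, the receiver configuration seen from $x$ is --- by a second use of Slivnyak --- a typical Poisson receiver field together with exactly one extra receiver, namely $y_0$, located at relative position $t:=y_0-x$. Hence $p_x$ has precisely the law $f_t$ introduced before the theorem, whose defining fixed point carries the extra term $\one(t\in S)/(1+|t|^\beta/(|r|^\beta T)-\psi)$ that accounts for this additional receiver. Substituting $p_x\sim f_t$, changing variables $x\mapsto t=y_0-x$ (unit Jacobian), and noting that the integrand depends on $t$ only through $|t|$ so that the average over the random direction of $y_0$ is trivial, turns the integral into $\int_{\R^2}\int_{[0,1]}\log\bigl(1-u/(1+|t|^\beta/(Tr^\beta))\bigr)\,f_t({\rm d}u)\,{\rm d}t$, which reproduces the integral in the second summand of~\eqref{eqn:mean-utility2}.

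The main obstacle is exactly this identification step: one must argue carefully that, conditionally on an interferer sitting at $x$, its optimal control is distributed as the typical MAP \emph{perturbed by the single extra receiver} $y_0$ of the tagged node, and not by the unconditional law $f$. This is where the correlation between the tagged node's presence and its neighbours' controls enters, and it is resolved by the double application of Slivnyak's theorem for the Poisson process (once at $0$, once at $x$) together with the fact that $\psi^S$ is a function of the receiver points only. The remaining steps --- isolating the noise constant, justifying the interchange of expectation and the infinite sum (using the integrability guaranteed by Theorem~\ref{t:opt-pi} together with dominated/monotone convergence), and the isotropic change of variables --- are routine.
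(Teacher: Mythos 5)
Your proof is correct, but it reaches \eqref{eqn:mean-utility2} by a genuinely different route than the paper. The paper first invokes the mass transport principle (its identity \eqref{eq:mastra}) to exchange the roles of transmitter and receiver: the sum $\sum_{j\neq 0}\log\bigl(1-p_j/(1+b_{j0})\bigr)$, which involves the \emph{interferers'} MAPs, is replaced under $\E^0$ by $\sum_{j\neq 0}\log\bigl(1-\psi^S/(1+b_{0j})\bigr)$, which involves only the \emph{tagged node's own} MAP evaluated against the receivers $y_j$; it then applies the refined Campbell/Mecke formula to the receiver process $\bar\Phi$, so that $f_t$ appears as the law of the typical MAP perturbed by an extra receiver at $t$ (steps \eqref{eq20}--\eqref{eq22}). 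You instead keep the interferers' MAPs and apply Mecke directly to the transmitter process $\Phi$, after which you must identify the conditional law of $p_x$; your argument --- that $\psi^S$ is a function of the receiver configuration only, that the interferer's own receiver is excluded from its fixed point, and that by stationarity of $\bar\Phi$ the receivers seen from $x$ form a typical Poisson field plus the single extra receiver $y_0$ at relative position $t=y_0-x$ --- is exactly right, and correctly flagged as the crux. In effect your Mecke-on-$\Phi$ computation re-derives the mass transport identity \eqref{eq:mastra} rather than citing it, so your route is more self-contained (the paper gets \eqref{eq:mastra} for free since it is already needed in the proof of Theorem~\ref{t:opt-pi}), at the price of the more delicate conditional-law identification that the paper's exchange step sidesteps. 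Two minor points of agreement rather than discrepancy: since all summands $\log\bigl(1-p_j/(1+b_{j0})\bigr)$ are nonpositive, the Mecke/Campbell interchange needs only Tonelli, with finiteness supplied by Theorem~\ref{t:opt-pi} as you note; and your explicit setting aside of the constant $-\mu Tr^\beta W$ from \eqref{eqn:succ-prob} matches the paper, which silently drops this additive noise constant in \eqref{eq:mastra} and in the statement of the theorem.
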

\begin{proof}
We have
\begin{eqnarray}
\mathbb{E}^0[\log(q_0^S)] & = &
\mathbb{E}^0\left[\sum_{j \neq 0} \log \left(1 - \frac{\psi^S }{1 + b_{0j}} \right)\right]\label{eq20}\\
&  &\hspace{-1.8cm}= \mathbb{E}^0\left[\sum_{j \neq 0} \log
\left(1 - \frac{\psi^S (\delta_{y_j}+ \sum_{i\ne j,0}
\delta_{y_i}) }{1 + \frac{|y_j|^\beta}{r^\beta T}} \right)\right]\label{eq21}\\
&  & \hspace{-1.8cm}= \lambda \int_{\R^2} \mathbb{E}^0\left[
 \log \left(1 - \frac{\psi^S (\delta_{t}+ \bar\Phi)
}{1 + \frac{|t|^\beta}{r^\beta T}} \right)\right]{\rm d}t\label{eq22}\\
&  &\hspace{-1.8cm}= \lambda \int_{\R^2}  \int_{[0,1]}
\log \left(1 - \frac{u}{1 + \frac{|t|^\beta}{r^\beta T}} \right)f_t({\rm d}u){\rm d}t\nonumber.
\end{eqnarray}
In these equations, we first used (\ref{eq:mastra}) in Appendix 
to get (\ref{eq20}), then stressed the fact that
$\psi^S$ is a function of $\bar \Phi$ by writing
$\psi^S=\psi^S(\sum_{i\ne 0} \delta_{y_i})$ in (\ref{eq21}),
and finally used the fact that $\bar \Phi$ is Poisson and
Campbell's formula to get (\ref{eq22}).
\end{proof}
%\begin{remark}
%In general $f(\cdot)$ and $f_t(.)$ have a continuous part and a point mass. For example, if $S = \emptyset$,
%$f(\cdot) = \delta_{p^{\ast}}$, where $p^{\ast}$ is the unique solution
%of~\eqref{eqn:MAP-const}. In other cases, $f(\cdot)$ may have a point mass at $\psi  = 1$.
%Therefore, the integrations in~\eqref{eqn:mean-utility2}
%should be handled appropriately.
%\end{remark}
%\vspace{-.5cm}
\subsection{Convergence}
%\vspace{-.2cm}
The aim of this section is to substantiate the claim that the
control schemes of Section \ref{sec:prop-fair-aloha} can
get arbitrarily close to the full information scheme.
We illustrate this for the case of the deterministic stopping sets
$B_0(R)$ by proving some continuity results of the main
performance metrics when $R$ tends to infinity.
When $R\to \infty$, we have
\begin{eqnarray*}
\sum_{{j \neq 0}\atop{y_j \notin B_0(R)}} \frac{1}{\frac{(|y_j|/r)^{\beta}}{  T}  + 1-\psi } \to 0,\quad
\int\limits_{R/r}^{\infty}\frac{s}{\frac{s^{\beta}}{  T} + 1 - \psi } {\rm d}s \to 0.
\end{eqnarray*}
It then follows from (\ref{eq:b0R}) and from standard
(deterministic) calculus arguments which are skipped here that, $\Pr^0$ a.s.,
$\lim_{R\to \infty} \psi^{B_0(R)}=\psi^{\R^2}$,
which proves the continuity of the MAP at infinity.
The main result of this section is:
\begin{lemma}
$\Theta^{B_0(R)}$ tends to $\Theta^{\R^2}$ when $R\to \infty$.
\end{lemma}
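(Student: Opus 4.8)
The plan is to start from the representation of $\Theta^S$ given in Theorem~\ref{thm44} and to pass to the limit $R\to\infty$ term by term using dominated convergence, exploiting the almost sure convergence $\psi^{B_0(R)}\to\psi^{\R^2}$ established just above (and its analogue for the perturbed law $f_t$). Writing $f^{(R)}$ and $f_t^{(R)}$ for the distributions $f$ and $f_t$ of Theorem~\ref{thm44} in the case $S=B_0(R)$, I would split $\Theta^{B_0(R)}=\lambda\int_{[0,1]}\log u\,f^{(R)}({\rm d}u)+\lambda\int_{\R^2}h_R(t)\,{\rm d}t$, where $h_R(t)=\int_{[0,1]}\log\!\big(1-\frac{u}{1+|t|^\beta/(Tr^\beta)}\big)f_t^{(R)}({\rm d}u)$, and treat the MAP term and the interference term separately. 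Fix once and for all an arbitrary cutoff radius $R_0>0$ and argue for $R\ge R_0$.

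For the MAP term I would produce a deterministic, $R$-independent integrable bound on $-\log\psi^{B_0(R)}$. Monotonicity of the R.H.S.\ of \eqref{eqn:opt-p0} in $\psi$ gives $\psi^{B_0(R)}\ge 1/a_i(B_0(R))$ with $a_i$ as in \eqref{eqn:bountfp}, and for $R\ge R_0$ one has $a_i(B_0(R))\le\Sigma+C_0$, where $\Sigma=\sum_{j\neq0}Tr^\beta/|y_j|^\beta$ and $C_0=2\pi\lambda Tr^\beta R_0^{2-\beta}/(\beta-2)$ (the sum over $y_j\in B_0(R)$ is at most the full sum $\Sigma$, and the residual integral is decreasing in $R$). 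Hence $0\le-\log\psi^{B_0(R)}\le\log^+(\Sigma+C_0)$, which is $\Pr^0$-integrable because the tail of $\Sigma$ is polynomial (dominated by the nearest receiver), so $\E^0[\log^+\Sigma]<\infty$. Dominated convergence together with $\psi^{B_0(R)}\to\psi^{\R^2}$ a.s.\ then gives $\int_{[0,1]}\log u\,f^{(R)}({\rm d}u)\to\int_{[0,1]}\log u\,f^{(\infty)}({\rm d}u)$.

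For the interference term I would dominate $h_R(t)$ uniformly in $R\ge R_0$ by an integrable function of $t$. Using $-\log(1-x)\le x/(1-x)$ with $x=\psi_t/(1+|t|^\beta/(Tr^\beta))$ gives $|h_R(t)|\le\E\big[\psi_t/(1+|t|^\beta/(Tr^\beta)-\psi_t)\big]$. For $|t|>R_0$, from $\psi_t\le1$ this is at most $Tr^\beta/|t|^\beta$, which is integrable on $\{|t|>R_0\}$ since $\beta>2$. For $|t|\le R_0$ one has $t\in B_0(R_0)\subset B_0(R)=S$, so the extra–receiver term in the $f_t$ fixed–point equation forces $\frac{1}{1+|t|^\beta/(Tr^\beta)-\psi_t}\le\frac1{\psi_t}$, whence $\psi_t/(1+|t|^\beta/(Tr^\beta)-\psi_t)\le1$ and $|h_R(t)|\le1$, integrable on the bounded disk. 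The pointwise limit $h_R(t)\to h_\infty(t)$ for a.e.\ $t$ follows from $\psi_t^{B_0(R)}\to\psi_t^{\R^2}$ a.s.\ (the same deterministic calculus argument that yields $\psi^{B_0(R)}\to\psi^{\R^2}$, applied to the $f_t$ equation) together with bounded convergence in $u$. Dominated convergence in $t$ then gives $\int_{\R^2}h_R\to\int_{\R^2}h_\infty$, and adding the two limits yields $\Theta^{B_0(R)}\to\Theta^{\R^2}$.

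The main obstacle is precisely the uniform integrable domination of the interference term: the naive bound $\sum_{j\neq0}1/b_{0j}$ has infinite Palm mean, so a crude estimate fails. The resolution is the self–regulation built into the fixed point \eqref{eqn:opt-p0}: a receiver at small distance $|t|$ lies in $S=B_0(R)$, and its own term drives the MAP down enough to keep $\psi_t/(1+|t|^\beta/(Tr^\beta)-\psi_t)$ bounded by $1$, while for large $|t|$ the $\beta>2$ decay makes the contribution integrable irrespective of $S$. The fixed cutoff $R_0$ is what lets both regimes be controlled simultaneously by a single $R$-independent integrable dominating function; everything else is routine dominated convergence once this bound is in place.
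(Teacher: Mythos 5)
Your proof is correct, but it takes a genuinely different route from the paper's on the harder half of the argument. The paper stays with the decomposition $\Theta^S=\lambda\left(\E^0[\log\psi^S]+\E^0[\log q_0^S]\right)$ and applies dominated convergence directly under the Palm expectation, using the a.s.\ convergence $\psi^{B_0(R)}\to\psi^{\R^2}$: it dominates $-\log\psi^{B_0(R)}\le 1/\psi^{B_0(R)}$ by $1+K_1+A(R)$, where $K_1$ is the number of receivers in $B_0(1)$ and $A(R)$ collects the sum over $B_0(R)\setminus B_0(1)$ plus the residual integral (finite Palm mean by Campbell's formula, the singularity near the origin being absorbed into $K_1$), and then declares the domination of $-\log q_0^{B_0(R)}$ ``similar'' and skips it. You instead first pass through the representation of Theorem~\ref{thm44}, which turns the $q_0$ term into a deterministic spatial integral $\int_{\R^2}h_R(t)\,{\rm d}t$, and dominate pointwise in $t$: the crude bound $Tr^\beta/|t|^\beta$ for $|t|>R_0$, and for $|t|\le R_0$ the self-regulation inequality $\psi_t/(1+|t|^\beta/(Tr^\beta)-\psi_t)\le 1$ extracted from the fixed-point equation itself. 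This makes fully explicit precisely the step the paper omits, and the fixed cutoff $R_0$ neatly yields a single $R$-independent dominating function; the only point you should add is the capped case $\psi_t=1$, where the inequality still holds because capping forces $a_i\le 1$ and hence $Tr^\beta/|t|^\beta\le 1$. For the MAP term your bound $-\log\psi^{B_0(R)}\le\log^{+}(\Sigma+C_0)$ is weaker than the paper's (your $\Sigma$ has infinite Palm mean and only a finite logarithmic moment, via the regularly varying tail governed by the nearest receiver, which you correctly invoke), but it suffices for the logarithm. Both arguments rest on the a.s.\ convergence $\psi^{B_0(R)}\to\psi^{\R^2}$ established before the lemma; you additionally need its perturbed analogue $\psi_t^{B_0(R)}\to\psi_t^{\R^2}$, which indeed follows from the same deterministic calculus argument, so this is a legitimate extra ingredient rather than a gap. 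In sum: same skeleton (a.s.\ convergence plus dominated convergence), but your dominated-convergence step for the interference term is performed with respect to Lebesgue measure on $\R^2$ after Campbell's formula rather than with respect to $\Pr^0$, which buys a cleaner, checkable domination at the cost of relying on Theorem~\ref{thm44} and the perturbed fixed-point family $f_t$.
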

\begin{proof}
In view of (\ref{eq:meanu}) and (\ref{eq20}), this will hold
if we can interchange the a.s. limits w.r.t. $R$ and the expectations
in both terms of the R.H.S. of (\ref{eq:meanu}).
In order to do so, thanks to the dominated convergence theorem, it
is enough to show that each of the positive
random variables $-\log (\psi^{B_0(R)})$
and $-\log (q_0^{B_0(R)})$ is uniformly bounded from
above by some random variables with finite mean, where uniformity is w.r.t. $R$ large enough.
For $-\log (\psi^{B_0(R)})$, 
using the bound $-\log (\psi^{B_0(R)})\le 1/\psi^{B_0(R)}$, we see that
it is enough to prove the property for $1/\psi^{B_0(R)}$. But we get from
(\ref{eq:b0R}) and elementary monotonicity arguments that for all $R>1$,
$\frac 1 {\psi^{B_0(R)}}  \le  \frac {K_1} {1-\psi^{B_0(R)}} + A(R)$
with $K_1$ the number of receivers in $B_0(1)$ (excluding $y_0$) and
$$ A(R) = \sum_{j \neq 0:y_j\in B_0(R)\setminus B_0(1) }\frac{r^\beta}{T|y_j|^{\beta}} 
+ 2 \pi \lambda r^2 \int_{R/r}^{\infty}\frac{s^{1-\beta}}{  T} {\rm d}s.$$
This gives
$\frac 1 {\psi^{B_0(R)}}  \le 1+K_1 + A(R)\le 1+K_1 + A(1)$,
which is the announced upper bound.
The proof for $-\log (q_0^{B_0(R)})$ is similar and is skipped.
\end{proof}

%\vspace{-.3cm}
%\input{numerical.tex}
\section{Numerical Evaluation and Simulation}
\label{ss:numerics}

%In this section, we study the proposed Aloha schemes quantitatively.
%We compute various metrics analyzed using stochastic geometry,
%and we also compare this to discrete event, spatial simulation.
%The latter is also used to quantify and compare
%the performance of the proposed schemes.

\remove{
\subsection{Computation of the Integrals}

We used Matlab to evaluate the integrals. The infinite integral that shows up in
the expression of the Laplace transform (\ref{eq:lapl}) is handled
without truncation by Maple and Matlab.
The singularity at $w=0$ in the contour integrals
(\ref{eq:Parsev}) leveraging Parseval's theorem
is a false singularity and it is also
handled without further work by Matlab.
\remove{
The Matlab code is particularly efficient and is
used throughout the analytical evaluations described
below.
}
}
\paragraph*{Simulation Setting}
We consider a two dimensional square with side length $L$, and
$N$ nodes placed independently over this square according to the uniform
distribution; this corresponds to $\lambda = N/L^2$ in the stochastic
geometry model\footnote{A finite snapshot of a Poisson random process would contain a Poisson distributed number of nodes. However, for large $\lambda L^2$, the Poisson random variable with mean $\lambda L^2$ is highly concentrated around its mean. Thus we can use $\lambda L^2$ nodes for all the realizations in our simulation.}.
Each node has its receiver randomly located on the
unit circle around it, again as per the uniform
distribution. Thus $r_{ii} = 1$ for all $i$. We set $\alpha = 4$
and $T = 10$. To nullify the edge effect, we take into account
only the nodes falling in the $L/2 \times L/2$ square around the center while computing
various metrics. While all other parameters remain, we vary $L$ and $N$ for different simulations.
For each parameter set we calculate the average of the performance metric of interest over $1000$ independent
network realizations.
\paragraph*{Joint Validation of the Analysis and the Simulation}
\label{subsec:validation}
For illustration, we consider $S = \R^2$ and 
study the c.d.f. $f_t$ 
of the MAP $\psi^{\R^2}$~(see Section~\ref{sec:mean-utility}).
We set $L = 40$ and $N = 400$, which corresponds to $\lambda = 0.25$.
Figure~\ref{fig:map-cdf-receiver} shows the plots for $\lambda = 0.25$ and two values of $t$:
$t= 1$ and $t = 10$. 
\begin{figure}[h]
\centering
\includegraphics[width=0.8\linewidth]{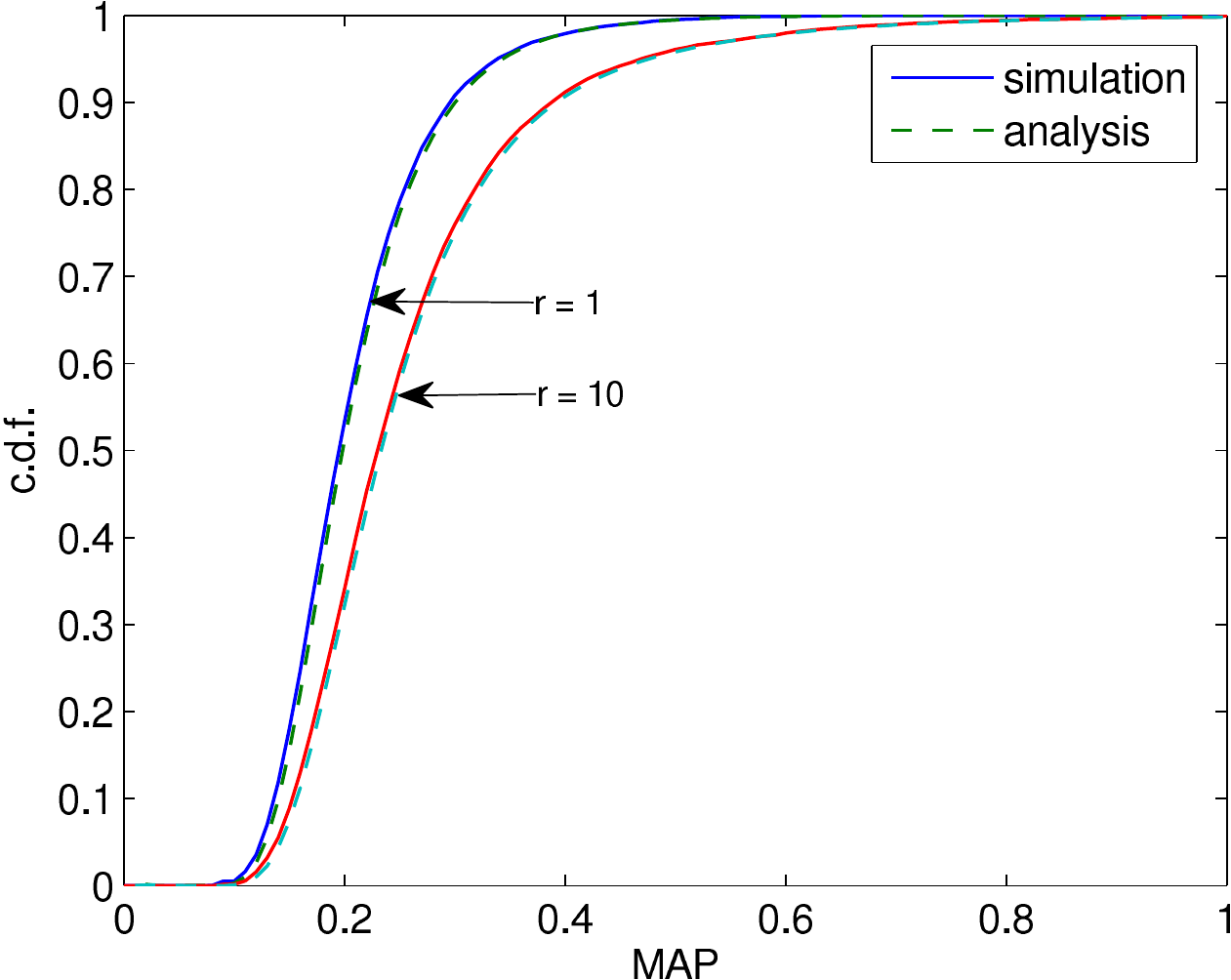}
%\vspace{-1ex}
\caption{C.d.f. $f_t$ of the MAP $\psi^{\R^2}$ with an extra receiver.}
\label{fig:map-cdf-receiver}
%\vspace{-1ex}
\end{figure}
First, observe that the stochastic geometry based
formula~(see Theorem~\ref{thm:map-distribution}) matches
simulation quite accurately.
%Also recall that $f_t$ is the distribution of $\psi^{\R^2}$ with an extra
%receiver at $t$. Hence,
As expected, $f_1(\psi^{\R^2}) \geq f_{10}(\psi^{\R^2})$ for all 
$\psi^{\R^2}$, i.e., the tagged node~(at origin) 
is more likely to be inactive for $t = 1$.

%To further illustrate the behavior of our adaptive schemes,
%we plot $\psi^{B_0(R_2)}$  
%in Figure~\ref{fig:map-2closest-receivers} 
%a function of $R_1$ and $R_2$, the distances of the two closest 
%receivers. Notice that
%$\psi^{B_0(R_2)}$ is a deterministic 
%function of $(R_1,R_2)$.
%\begin{figure}[h]
%\centering
%\vspace{-3.2cm}
%\includegraphics[width=0.9\linewidth]{map-2closest-receivers.pdf}
%\vspace{-3cm}
%\caption{The symmetrized MAP
%$(R_1,R_2)\to \psi^{B_0(R_2)}$. Here $\lambda = 0.125$.}
%\label{fig:map-2closest-receivers}
%\end{figure}
%\vspace{-.3cm}
\paragraph*{Constant Stopping Sets}
\label{sec:performance}
We illustrate here the variation of performance w.r.t. information
on the case $S=B_0(R)$ through two sets of curves (both obtained
by simulation). The top picture
of Figure \ref{fig:mean-thput-log} quantifies the gains of the
{\em mean of the logarithm
of the typical node's throughput} as $R$ increases.
\begin{figure}[h]
\centering
\includegraphics[width=0.95\linewidth]{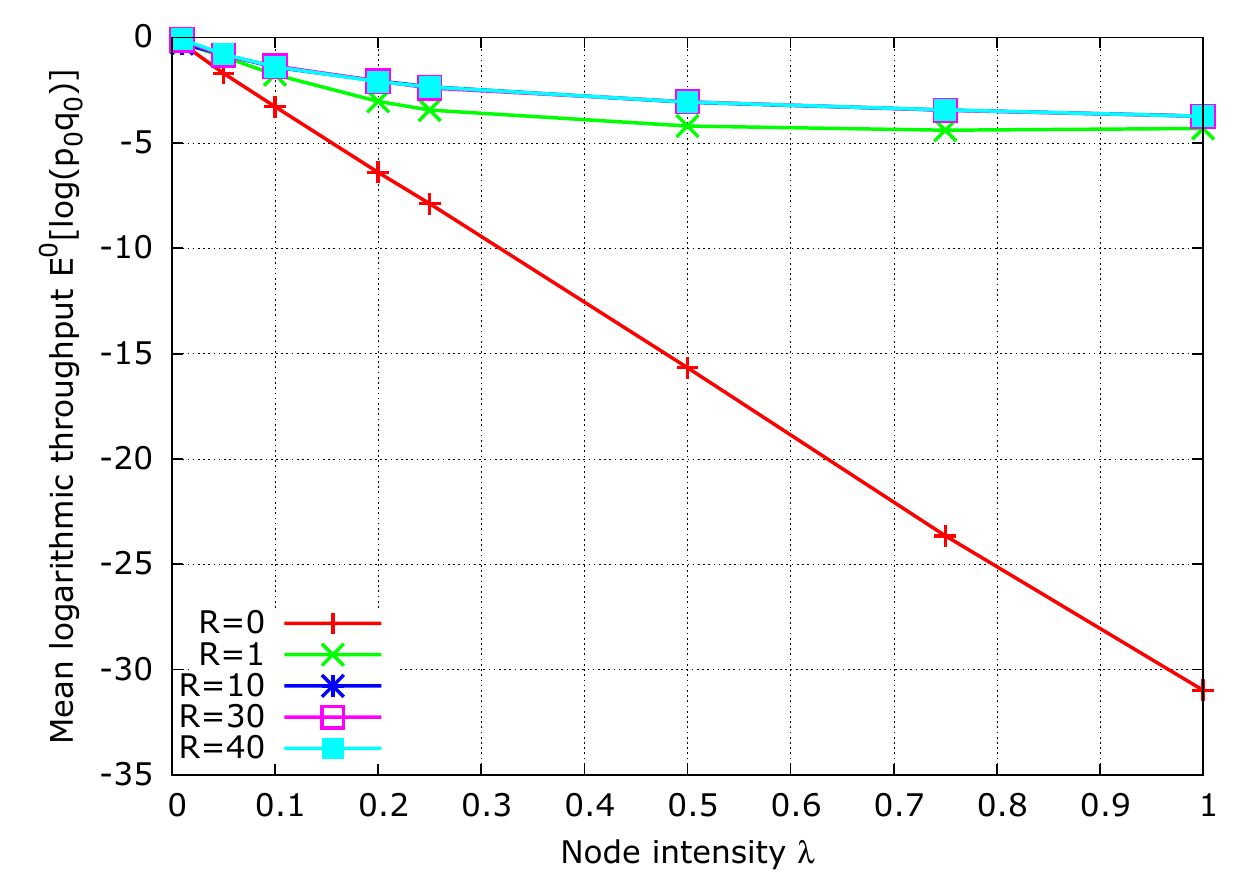}
\includegraphics[width=1\linewidth]{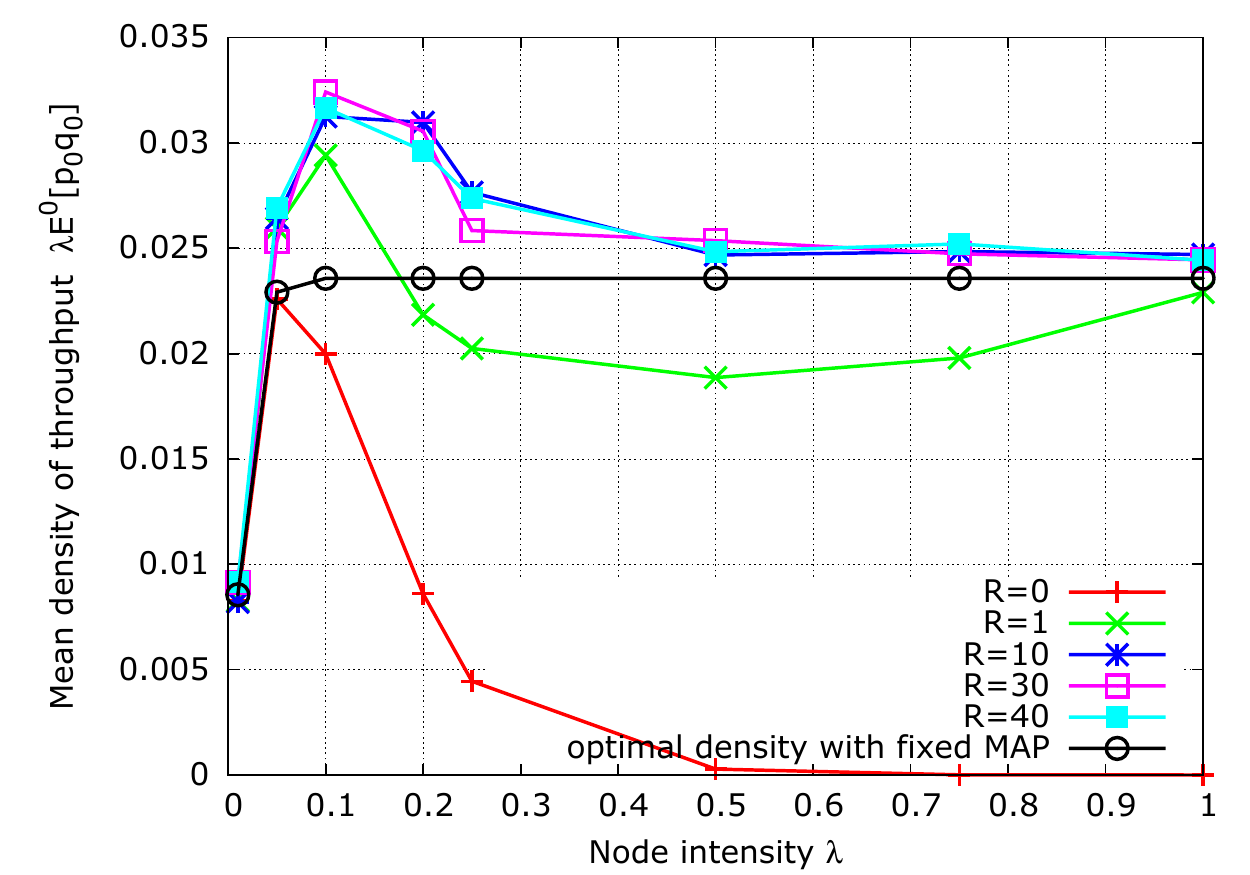}
%\vspace{-3ex}
\caption{Top: mean logarithmic throughput as a function of $\lambda$ for $S=B_0(R)$. Bottom: Density of throughput as a function of $\lambda$ for $S=B_0(R)$. The curve entitled ``optimum density with fixed MAP'' gives the result of the parametric
optimization of the density of throughput
performed in \cite{ctrltheory-wireless.baccelli-etal06aloha-multihop-wireless}
for comparison.}
\label{fig:mean-thput-log}
%\vspace{-2ex}
\end{figure}
In contrast, the bottom picture
of Figure \ref{fig:mean-thput-log}
plots the variations of the {\em mean aggregate throughput per unit space}
(referred to as density of throughput) as a function
of $\lambda$ when $R$ varies.
A first observation, in line with
\cite{ctrltheory-wireless.hsu-su11channel-aware-aloha},
is that an increase of information
does {\em not} necessarily lead to a better performance for
this last metric (there is no reason for this to hold anyway
as the schemes optimize the above logarithmic metric). 
This is however true for small $\lambda$, where
the absolute gains are quite substantial. 

\paragraph*{Random Stopping Sets}
In Figure~\ref{fig:agg-thput},
we plot the average aggregate throughput (sum of throughputs
throughout the simulation window)
under the optimal control $\pi^S$ for
$S = \R^2, B_0(R_1)$ and $\emptyset$,
%We keep $L$ fixed at $20$, but vary $N$ from $8$ to $400$;
%this corresponds to varying
and for $\lambda$ varying from $0.02$ to $1$.
This figure shows that knowing~(and accounting for)
the distance of the closest receiver only brings most of
the potential gains in terms of aggregate throughput\footnote{The scheme 
$S=B_0(R_1)$, or more precisely its variant $S=B_0(R_1)\cap B_0(R)$,
which is better in terms of implementation (see Section \ref{sec:impl})
hence provides a good candidate in this class.}. 
We also note that while the gains are large for
smaller node intensities, they diminish as $\lambda$ increases.
\remove{
\begin{figure}[h]
\centering
\includegraphics[width=0.6\linewidth]{map-distance1.pdf}
\caption{MAP as a function of distance to the closest receiver; $\lambda = 0. 25$.}
\label{fig:map-distance}
\end{figure}

\begin{figure}[h]
\centering
\includegraphics[width=0.95\linewidth]{map-distribution1.pdf}
\caption{C.d.f. of the MAP~(closest receiver case); $\lambda = 0. 25$.}
\label{fig:map-distribution}
\end{figure}
}
%\vspace{-.3cm}
%\input{implementation.tex}

\section{Implementation}
\label{sec:impl}
The data required at transmitter 0 (the tagged transmitter) in order to solve 
the fixed point Equation (\ref{eqn:opt-p0}) have two components: (1)
the parameters $b_{0j}$, for all $y_j$ in the stopping set $S$, which show up
in the sum of the R.H.S., and
(2) the intensity parameter $\lambda$ which shows up in the integral of this R.H.S.
The estimation of $\lambda$ will not be discussed here.
We will concentrate below on the mechanisms through which the sequence $b_{0j}$, $y_j\in S$,
on which the adaptation is based,
can be estimated by the tagged transmitter. Assuming that $T$ and $r$ are known,
the knowledge of $b_{0j}$ is equivalent to that
of the path-loss $|X_0-y_j|^\beta$ between the tagged transmitter and receiver $j$.
We center the discussion on the case $S=B_0(R)$.

\begin{figure}[t!]
\centerline{\includegraphics[width=1\linewidth]{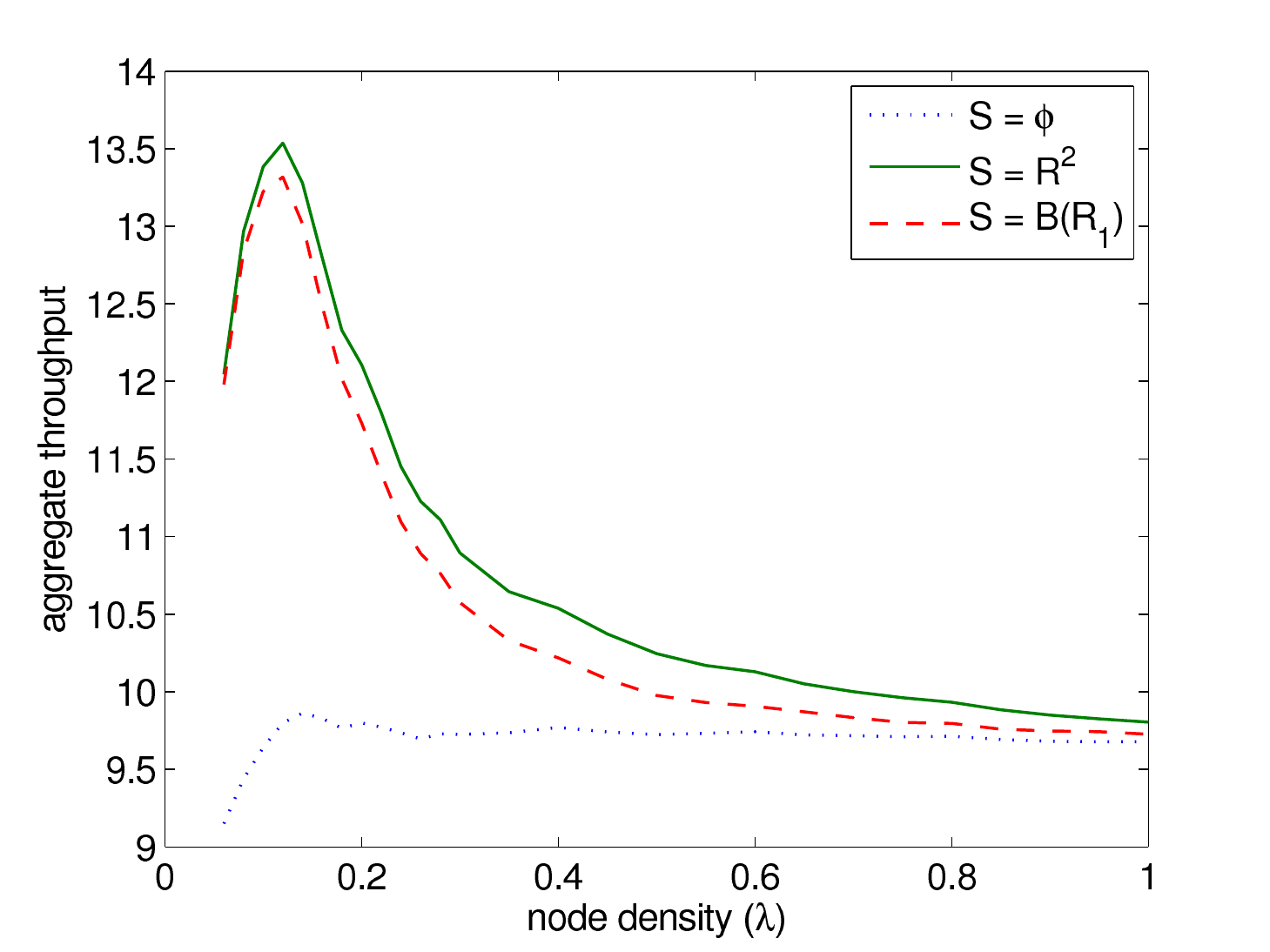}}
%\vspace{-1ex}
\caption{Aggregate throughput as a function of intensity.}
\label{fig:agg-thput}
%\vspace{-2ex}
\end{figure}

Our assumptions on the communication capabilities of the nodes are the usual ones for
an Aloha MAC setting:
(a) Each node (called {\em transmitter} or {\em receiver} until now) can actually both transmit and receive:
{\em transmitters} transmit packets and receive acknowledgements, whereas {\em receivers}
receive packets and transmit acknowledgements. 
(b) Each acknowledgement is transmitted with a known and constant power $P_a$.
(c) Each acknowledgement contains the identity of the transmitter-receiver pair.
In addition, we assume here that the tagged receiver can decode all acknowledgements stemming from receivers  
at distance less than $R$. This assumption is debatable in general but justified for moderate $R$.
\footnote{For instance, for $R$ of the order of $r$,
since the data rate required for acknowledgements is much lower than that for
payload packets, the SINR conditions for acknowledgements are easier
to meet than those for packets, so this assumption is acceptable.}

The main idea discussed below consists in using the acknowledgements that are
broadcast by all receivers to allow the tagged transmitter to estimate $|X_0-y_j|^\beta$
for all $y_j\in S$ (and in fact each transmitter to estimate in parallel
his path-loss to each receiver in his own stopping set). For this,
each transmitter decodes all acknowledgements 
and keeps a moving average of the powers at which acknowledgements stemming from node $j\in S$
were received.  
Since acknowledgements stemming from $j$ with $y_j\in S$ are assumed to
be successfully decoded with probability 1 and since the identity of receiver $j$ is assumed to
be contained in the acknowledgement, the tagged transmitter has the means to maintain
this moving average.  
The $k$-th element in the time series of received powers is $P_a G_{j0}^{(k)} |X_0-y_j|^\beta$,
with the sequence $\{G_{j0}^{(k)}\}_k$ made of i.i.d. exponential random variables
of mean 1 (because of the Rayleigh fading assumption).
By the strong law of large numbers, we have
$\lim_{n\to \infty} \frac 1 n \sum_{k=1,n} P_a G_{j0}^{(k)} |X_0-y_j|^\beta= P_a |X_0-y_j|^\beta \quad a.s.$.
So the tagged node can indeed estimate $|X_0-y_j|^\beta$ from his moving average
for all $y_j$ at a distance less than $R$, and hence $b_{0j}$ for all $y_j\in S$.

%\vspace{-.1cm}
%\input{conclusion.tex}
\section{Conclusion}
\hspace{-.2cm}
This paper determines in an analytic way the performance of a Poisson network
using a version of Aloha where each transmitter willingly adapts
his medium access probability based on local information
so as to reach a network wide proportional fairness.
To the best of our knowledge, this is the first
example of successful combination of stochastic geometry and
local adaptive protocol design aimed at optimizing a utility function
within this Aloha setting.
The particular case where the local information is limited
to the interference incurred by the closest receiver is shown to
be have some implementable variants and to bring most of the potential performance gain.
%\vspace{-.2cm}
%\input{appendix}
\appendix
We give here the proof of Theorem~\ref{t:opt-pi}.  From~\eqref{eqn:succ-prob},
%under~$\Pr^0$ 
%the typical node's conditional successful transmission
 %probability can be written as
%$ 
 %q_0 = \exp(-\mu T r^{\beta}W) \prod_{X_j \not= 0}\left(1 - \frac{p_j}{1 + b_{j0}}\right).
%$ 
%where $b_{j0} := |X_j - y_0|^{\beta}/{T r^{\beta}} = |X_j -
%r_0|^{\beta}/{T r^{\beta}}$ and $y_0,b_{j0}$ are the respective marks
%of the typical node located at the origin under  $\mathbb{P}^0$.
%Consequently 
\begin{eqnarray}\nonumber
\mathbb{E}^0[\log(p_0q_0)]
&= &\mathbb{E}^0[\log( \psi)+ \sum_{X_j \neq 0} \log \left(1 - \frac{p_j}{1 + b_{j0}} \right) ] \\
 &   &\hspace{-2cm}
=\mathbb{E}^0[\log( \psi)]
+ \mathbb{E}^0[\sum_{X_j \neq 0} \log \left(1 - \frac{ \psi}{1 + b_{0j}} \right) ]\,,
\label{eq:mastra}
\end{eqnarray}
where the second equality is due to 
the {\it mass transport principle}~(cf.~\cite[Page~65]{stochproc-wireless.baccelli-blaszczyszyn09stochastic-geometry-wireless-networks-2}).
%The maximization of this expression in function $ \psi$ is (almost
%surely) equivalent to the maximization of the expression under the expectation for% any
%given network realization. Consider the derivative of this expression
%in $ \psi$ and some simple monotonicity argument  concludes the proof.  
Our goal is  to maximize the last expression in function 
of $\psi$ under the restriction~(\ref{e.pi-S}) for the given stopping set
$S=S(\Phi,\bar\Phi)$. For this, we split the sum in this expression into two terms
depending on whether  $y_j\in S$ or  $y_j\not\in S$.
By the  {\it strong Markov property};  cf \cite[Definition~1.16]{stochproc-wireless.baccelli-blaszczyszyn09stochastic-geometry-wireless-networks-1})
and Campbell's formula for the Poisson point process $\bar{\Phi}$, we
get that the second term of R.H.S. of (\ref{eq:mastra}) is equal to
%\begin{eqnarray*}
%&&\hspace{-1cm}\mathbb{E}^0 \Bigg[\sum_{{y_j\not\in
%    S, j\not=0} } \log \left(1 - \frac{\psi}{1 + b_{0j}} \right) \Bigg]\\
%&=&
$\E^0\Bigg[\int_{y \in \mathbb{R}^2\setminus S}\log\left(1 - \frac{\psi}{1 +
    |y|^\beta/Tr^{\beta}} \right){\rm d}y \Bigg] \,.$
%\end{eqnarray*}
We thus conclude that ${\bf PF}^S$ is equivalent to 
the maximization of the following expectation 
in the considered class of functions $\psi(\cdot)$ 
\begin{align}\label{e.expectation-to-maximize}
\E^0\biggl[\log(\psi) & + \sum_{y_j \in S,j\not=0} \log \left(1 -
  \frac{\psi}{1 + b_{0j}} \right) \\ \nonumber
& + \lambda\int_{y \in \mathbb{R}^2\setminus S}\log\left(1 - \frac{\psi}{1 + |y|^\beta/Tr^{\beta}} \right) {\rm d}y\biggr]\,.
\end{align}
It is obvious that maximizing the expression under the expectation for
any given realization of $\bar\Phi$ (it does not depend on
$\Phi$) we will obtain a solution of  ${\bf PF}^S$.
The derivative of this expression with respect to $\psi$ is equal to 
$$\frac{1}{\psi}-\sum_{y_j\in S, j\not=0}\frac{1}{1 + b_{0j} - \psi}
-\int_{y \in \mathbb{R}^2\setminus S}\frac{\lambda{\rm d}y}{1+|y|^\beta/Tr^{\beta} - \psi}\,.$$
It is continuous and  decreasing in $\psi$  over $[0, 1]$.
This proves that the MAC policy $\psi^S$ 
is a solution of   ${\bf PF}^S$. 
In order to prove that $\E^0[\psi^S q_0]>-\infty$ observe that  
the expectation~(\ref{e.expectation-to-maximize}) is finite
($>-\infty$) for a constant MAC policy $\psi=\text{\em const}\in(0,1)$. 

%Uniqueness
Finally, in order to prove the uniqueness,
suppose  
that for some other MAC policy~$\psi'(\cdot)$  the value of the
expectation~(\ref{e.expectation-to-maximize}) is equal to that 
obtained for $\psi^S$; i.e., 
$\E^0[H(\psi^S(\bar\Phi))]=\E^0[H(\psi'(\bar\Phi))]$,
where $H(\cdot)$ is the expression under the expectation
in~(\ref{e.expectation-to-maximize})  as a function of the MAC policy.
Then we have
$\max(H(\psi^S(\cdot),H(\psi'(\cdot))-H(\psi'(\cdot))\ge0$
and 
\begin{align*}
\E^0&[\max(H(\psi^S(\bar\Phi)),H(\psi'(\bar\Phi))-H(\psi'(\bar\Phi))]\\
&\hspace{-.4cm}=\E^0[\max(H(\psi^S(\bar\Phi)),H(\psi'(\bar\Phi))]-\E^0[H(\psi'(\bar\Phi))]\\
&\hspace{-.4cm}\ge
\max(\E^0[H(\psi^S(\bar\Phi))],\E^0[H(\psi'(\bar\Phi))])-\E^0[H(\psi'(\bar\Phi))]=0
\end{align*}  
since 
$\E^0[H(\psi'(\bar\Phi))]=\E^0[H(\psi^S(\bar\Phi))]>-\infty$.
Consequently
$\max(H(\psi^S(\bar\Phi)),H(\psi'(\bar\Phi))=H(\psi'(\bar\Phi))=
H(\psi^S(\bar\Phi)$
for almost all realizations of $\bar\Phi$. The fact that $H(\cdot)$
has one maximum in $[0,1]$ (attained for $\psi^S$) allows one to conclude
that  $\psi'(\bar\Phi)=\psi^S(\bar\Phi)$
for almost all realizations of $\bar\Phi$.
%This concludes the proof.
%\vspace{-.1cm}
\bibliographystyle{abbrv}\pdfbookmark[1]{References}{references} 
{\small
%\bibliography{adaptive-spatial-aloha}}

\end{document}